\numberwithin{equation}{section}
\theoremstyle{plain}
\newtheorem{theorem}{Theorem}[section]
\newtheorem{claim}{Claim}[section]
\newtheorem{corollary}{Corollary}[section]
\newtheorem{proposition}{Proposition}[section]
\newcommand{\revision}[1]{\textcolor{black}{#1}} 
\theoremstyle{remark}
\newtheorem{assumption}{Assumption}[section]
\newtheorem{definition}{Definition}[section]
\begin{document}

\begin{frontmatter}
\title{A primer on optimal transport for causal inference with observational data}
\runtitle{A primer on OT for CI with observational data}

\begin{aug}
\author[A]{\fnms{Florian}~\snm{Gunsilius}\ead[label=e1]{fgunsil@emory.edu}\orcid{0000-0002-1698-6324} }
\address[A]{Department of Economics, Emory University\printead[presep={\ }]{e1}. Zheng Fang, Rex Hsieh, Esfandiar Maasoumi, Guido Romero, and Alejandro Sanchez-Becerra provided greatly helpful feedback. All errors are the author's.}
\end{aug}

\begin{abstract}
The theory of optimal transportation has developed into a powerful and elegant framework for comparing probability distributions, with wide-ranging applications in all areas of science. The fundamental idea of analyzing probabilities by comparing their underlying state space naturally aligns with the core idea of causal inference, where understanding and quantifying counterfactual states is paramount. Despite this intuitive connection, explicit research at the intersection of optimal transport and causal inference is only beginning to develop. Yet, many foundational models in causal inference have implicitly relied on optimal transport principles for decades, without recognizing the underlying connection.
Therefore, the goal of this review is to offer an introduction to the surprisingly deep existing connections between optimal transport and the identification of causal effects with observational data---where optimal transport is not just a set of potential tools, but actually builds the foundation of model assumptions. As a result, this review is intended to unify the language and notation between different areas of statistics, mathematics, and econometrics, by pointing out these existing connections, and to explore novel problems and directions for future work in both areas derived from this realization. 

\end{abstract}

\begin{keyword}
\kwd{causal inference}
\kwd{difference-in-differences}
\kwd{identification}
\kwd{instrumental variables}
\kwd{monotone rearrangement}
\kwd{observational data}
\kwd{optimal transport}
\kwd{synthetic controls}
\kwd{treatment heterogeneity}
\end{keyword}

\end{frontmatter}
\section{Introduction}
At the heart of many scientific and practical problems lies the challenge of understanding causal relationships of interventions and modifications: will I feel better if I take this medication? What are the health effects on the population of opening up a factory near this neighborhood? These questions all revolve around a fundamental principle: identifying and analyzing the effect an intervention---taking the medication, building the factory in this neighborhood---has on the system under consideration, that is, my body or the neighborhood. Such questions are difficult to answer with data since they require the knowledge of two mutually exclusive states of the world, the \emph{counterfactual states}. Either the factory is built at this location or it is not, but it is impossible to ever observe both states at the same time. This is the ``fundamental problem of causal inference'' \citep{holland1986statistics}. 

This problem is amplified in settings where the statistician does not have the agency to devise the intervention directly---as is the case in randomized controlled trials for instance---but only has access to observational data. In such cases the treatment variable is \emph{endogenous} to the system, meaning that it is not independent of the unobservables that affect the outcome of interest. A classic example is self-selection into treatment \citep{angrist2009mostly, imbens2015causal}, where treatment is not randomized, but unobservable criteria affect both the uptake of the treatment as well as the outcome. For instance, motivation to find a new job after losing it recently will make individuals more likely to sign up for potential job training programs (the treatment) but also makes them more likely to send more applications and land a new job (the outcome), something documented by the classic Ashenfelter Dip \citep{heckman1999pre}. Therefore, simply comparing the outcome in both groups---the people in the job training program versus the rest of the population---does \emph{not} provide the causal effect of attending the job training program. One essentially compares two different groups based on characteristics and would falsely attribute the difference exclusively to the treatment effect of attending the job training program. 

Such problems have been at the center of interest in classical econometrics since the 1920's \citep{haavelmo1943statistical,tinbergen1930determination, wright1928tariff}, starting with the analysis of supply and demand models. Fundamentally, in such settings, identification, estimation, and inference of the effect of the intervention on the outcome of interest requires additional assumptions tailored to the specific case. Over the decades, statisticians and econometricians have developed a plethora of such methods, including instrumental variable estimation, matching, difference-in-differences, and synthetic controls. 

Traditionally, the focus in this literature has been on identifying average effects, essentially estimating an expected effect of the intervention on the outcome of interest in the system. While this is often an important measure, modern research in this area has emphasized the importance of accounting for \emph{treatment heterogeneity} \citep{heckman1997making}. 

A classical and influential example is the literature on the effect of minimum wage increases on employment \cite{card1994minimum, neumark2000minimum}. To analyze the effect a minimum wage increase had on the employment growth rate at fast food chains in the state of New Jersey in $1992$, David Card and Alan Krueger \cite{card1994minimum} used a difference-in-differences approach focusing on average effects; they compared the average change in the employment rate at the surveyed New Jersey fast-food restaurants to the average change of employment at fast-food restaurants in the neighboring Pennsylvania, where the minimum wage remained constant. Somewhat counterintuitively, they found that an increase in the minimum wage increased employment. In stark contrast, David Neumark and William Wascher \cite{neumark2000minimum}, using a different dataset and different methodology, found a negative average effect. 

While the debate of the effects of the minimum wage on employment is still ongoing, an interesting contribution \citep{ropponen2011reconciling} attempts to reconcile these two opposite results by considering the entire distribution of the size of fast-food restaurants in each state---using the changes-in-changes estimator \citep{athey2006identification}, which is an extension of the difference-in-differences idea to account for individual heterogeneity, implicitly built using optimal transportation as shown below. The author finds that the positive effect seems to persist for smaller fast-food restaurants while a negative effect can be found for larger fast-food restaurants. This is a striking example of \revision{how accounting for heterogeneity} in the causal framework can uncover important insights into the problem that an average effect cannot.

The goal of this review is to show how optimal transportation can be used to resolve such and similar problems by allowing to capture \emph{heterogeneity} within the respective setting \citep{heckman2001micro, imbens2007nonadditive}. Importantly, optimal transport allows one to do this in two ways, the second being arguably more novel: the first way fundamentally focuses on identifying causal effects for different individuals in the outcome distribution. In contrast to classical approaches that focus on \emph{average} causal effects, this framework explicitly focuses on the heterogeneity of effects, away from the average, often by considering the quantiles in the potential outcome distributions. The second way is based on a ``systems view''. Here, the goal is not to identify the different individual effects within the system, but to analyze the different counterfactual states of the system \emph{as a whole}.

\revision{It is important to clarify what is meant by ``heterogeneity'' in this review, as the term is used differently across literatures. In much of the causal inference literature, treatment effect heterogeneity refers to variation in average treatment effects across subgroups defined by \emph{observed} covariates, the so-called conditional average treatment effects (CATE). The notion of heterogeneity in this review is different, however. Here, heterogeneity refers to the variation in the causal mechanism induced by the \emph{unobservable} $U$: different individuals, indexed by different realizations of $U$, experience different causal effects of the treatment $X$ on the outcome $Y$. The goal is therefore to identify richer features of the causal mechanism $g(x,u)$ (or at least of the distribution of potential outcomes) than what is captured by an average treatment effect alone. Moreover, the ``systems view'' described above does not focus on individual-level effects at all, but instead on how the \emph{entire distribution} of the outcome changes under different counterfactual states.}

This review therefore has several goals. 
\begin{enumerate}
    \item One goal is to illustrate that not just the methods, but actually the models and assumptions in the literature on identifying causal effects with observational data are implicitly based on optimal transportation. Moreover, I aim to provide a common ground to use optimal transport as a paradigm to connect different areas from econometrics, statistics, machine learning, and artificial intelligence that---knowingly or unknowingly---use optimal transportation in the causal argumentation. This seems useful, because in recent years, there has been an increase in articles explicitly using optimal transportation methods to do causal inference, and many of these approaches \citep[e.g.][]{lin2023causal} do not reference existing related results \citep[e.g.][]{athey2006identification, imbens2007nonadditive}. 
    \item This review first and foremost focuses on \emph{identification} in causal inference with observational data. Identification arguments are of paramount importance in causal inference and precede estimation and inference: researchers need to show that they are actually able to obtain the correct effect in the population before focusing on estimating it. This makes causal inference significantly more complicated than prediction, because the observable distributions in general do not provide correct estimates---unobservable confounders bias estimation results, and one needs special tools to circumvent the resulting endogeneity issues, some of which I outline in this review. 
    \item A third goal is to show how optimal transport can be used to enhance, augment, and generalize many of the existing linear regression methods, in particular when it comes to the identification of heterogeneous effects \citep{heckman1997making}, or identifying effects on entire systems \citep{gunsilius2023distributional}. The key is to generalize the assumptions and models from the literature to more general settings.
    \item Finally, I attempt to point out some open problems and potentially new connections in this quickly expanding area. For instance, I show how realizing that the influential control variable approach \citep{imbens2009identification} is based on optimal transportation allows one to straightforwardly generalize it to make it vastly more applicable in practical settings. As another example, I provide a connection between partial identification in instrumental variable models and optimal transport problems on path space.\revision{This approach has the potential to become a fundamental framework for talking about causal identification. In addition, it showcases how optimal transportation can serve as a bridge between causal inference and statistical physics, allowing for potentially fruitful cross-pollination of these two fields (see also the connection between partial identification of causal effects and Bell's inequality in \cite{pearl1995testability} and \cite{gunsilius2021nontestability}.)}
\end{enumerate}
\revision{This overview} is necessarily biased towards my own area of research, but I made every effort to provide a general overview, trying to connect existing results from diverse areas.

\revision{\textbf{A note on notation.} I decided to make each section of this review follow the notational conventions of the source literature it discusses. In Sections 3--4, $X$ denotes the treatment (or endogenous variable), $Y$ the outcome, $U$ the unobservable, and $Z$ the instrument. In Section 5, $t$ indexes time periods and $g$ indexes treatment/control groups. In Section 6 (synthetic controls), $j$ indexes aggregate units and $X_j$ denotes observed covariates in the factor model. In Section 7 (matching), $T\in\{0,1\}$ denotes the treatment indicator and $X\in\mathbb{R}^d$ the observed covariates. I flag these transitions at the beginning of each section.}

\section{A quick refresher on Optimal Transport}
The basic problem of optimal transportation is the following. Suppose we are given two probability distributions $P$ and $Q$ on some underlying sets (supports) $\mathcal{X}$ and $\mathcal{Y}$. The original problem posed by Gaspard Monge \cite{monge1781memoire} is to find an optimal map $T:\mathcal{X}\to\mathcal{Y}$ that preserves the mass between $P$ and $Q$ and minimizes the overall cost of transporting $P$ onto $Q$ measured in terms of the cost function $c:\mathcal{X}\times\mathcal{Y}\to\mathbb{R}^+_0$, that is
\begin{equation}\label{eq:monge}
\min_{\substack{T:\mathcal{X}\to\mathcal{Y}\\T_\#P = Q}}\int c(x,T(x)) dP(x),
\end{equation}
where $T_\#P(A) = P\left(\left\{x\in\mathcal{X}: T(x)\in A\right\}\right)$ is the pushforward measure from $P$ via $T$. This problem need not have a solution as there need not exist a function $T$ that accomplishes this, for instance when $P$ has fewer points in its support than $Q$. The cost function $c(x,y)$ is given and encapsulates all the important information on the problem. In many settings, it is simply a distance function on the underlying space on which the probability measures are supported. One of the arguably most influential contributions to theory was the following convex relaxation of the problem by Leonid Kantorovich, which instead of asking for a map $T$, only requires a coupling between $P$ and $Q$, that is, a joint distribution $\gamma$ on $\mathcal{X}\times\mathcal{Y}$ that solves the linear program
\begin{equation}\label{eq:kant}
\min_{\gamma\in \Gamma(P,Q)} \int_{\mathcal{X}\times\mathcal{Y}} c(x,y) d\gamma(x,y),
\end{equation}
where $\Gamma(P,Q) = \{\gamma\in P(\mathcal{X}\times\mathcal{Y}): \pi_1\gamma=P, \pi_2\gamma=Q\}$ is the set of all couplings of $P$ and $Q$ \citep{kantorovich1942translocation} and $\pi_1$ (respectively: $\pi_2$) denotes the projection onto the first (respectively: second) marginal distribution. \eqref{eq:kant} is the Kantorovich problem. It is a convex relaxation of the Monge problem \eqref{eq:monge} in the sense that if \eqref{eq:monge} has a solution $T$ then the optimal coupling solving \eqref{eq:kant} is supported on the graph of $T$ in general. Both the minimum and the optimizer are of interest, particularly in economics \citep{galichon2018optimal}, because there one is often interested in the optimal allocation and matching of resources. 

In the case where the cost function is the squared Euclidean distance $c(x,y) = |x-y|^2$, $x,y\in\mathbb{R}^d$, the square root of the value function, i.e., the square root of the minimum achieved via \eqref{eq:kant}, can be shown to possess all the properties of a metric. In this case, the value function is denoted by $W_2^2(P,Q)$ and is called the (square of the) $2$-Wasserstein distance. Technically, it would be more accurate to at least call it the ``Monge-Kantorovich-Wasserstein'' distance, but this nomenclature has stuck. Finally, in the case of the squared Euclidean distance as a cost function and if $P$ is absolutely continuous with respect to Lebesgue measure, then Brenier's theorem \citep[e.g.][Theorem 2.12]{villani2021topics} implies that the optimal transport map $T$ in \eqref{eq:monge} takes the form of the gradient of a convex function, i.e., $T(x)=\nabla\varphi(x)$ for some convex $\varphi$. This property has been a main contributor to a surge of optimal transport methods in statistics and econometrics \citep[e.g.][]{carlier2016vector, del2024nonparametric, gunsilius2023independent}, because the gradient of a convex function is a natural generalization of a monotone function. 

Since then the contributions to theory have exploded to dynamic \citep{benamou2000computational}, weak \citep{backhoff2022applications, gozlan2017kantorovich}, multimarginal \citep{agueh2011barycenters, carlier2010matching}, regularized \citep{cuturi2013sinkhorn}, unbalanced \citep{chizat2018unbalanced, liero2018optimal, sejourne2023unbalanced}, and geometric representations \citep{jordan1998variational}, just to name a few. In the following, I introduce the respective mathematical theory whenever it is needed. For general overviews of the basic concepts, consider \cite{peyre2019computational, rachev2006mass1, santambrogio2015optimal, villani2008optimal, villani2021topics}. 

Fundamentally, the fact that optimal transportation induces a coupling between probability measures by \emph{mapping between the underlying state spaces} $\mathcal{X}$ and $\mathcal{Y}$ provides a natural connection to causal inference if one considers $P$ and $Q$ probability measures of outcomes of interest and unobservable confounders. To showcase the utility of optimal transportation, I now introduce a classical setting of structural equation models and show how optimal transportation has implicitly provided the underlying structure for the fundamental identification and estimation approaches in this area.

\section{The foundation: structural models, counterfactuals, and the monotone rearrangement}
This section analyzes how optimal transportation allows for the identification of causal effects while accounting for individual heterogeneity. In fact, I show how ideas from optimal transportation, in particular the \emph{monotone rearrangement}, have built the foundation for models and assumptions in this area and how this insight can be used for generalizations and to develop new methods.
\revision{In this section and the next, $X$ denotes the treatment (or endogenous variable), $Y$ the outcome, and $U$ the unobservable.}

\subsection{Structural models}
A way to formalize the setup from the introduction is via \emph{structural models} of the form
\begin{equation}\label{eq:struct_mod}
Y= g(X,U),
\end{equation}
where $Y:\Omega \to \mathbb{R}$ is the observed outcome of interest and $X:\Omega\to \mathbb{R}^d$ is the observed treatment variable, that is, the variable in the system whose effect on the outcome we want to isolate. $U:\Omega\to \mathcal{U}$ accounts for all other unobserved influences in the system, in particular the \emph{treatment heterogeneity}: different values of $U$ introduce different causal mechanisms between $X$ and $Y$. $\mathcal{U}$ can potentially be infinite dimensional. All variables are defined on a common probability space $(\Omega, \mathcal{A}, P)$. In many classical causal inference settings $X$ is a univariate binary variable taking values in $\{0,1\}$, indicating the treatment status, \revision{but it can be more general}. The unobserved parts in this model are the variable $U$ and the causal mechanism $g(X,U)$. 

The generality in the assumption of the causal mechanism serves to not introduce additional structural assumptions into the model besides a postulated relationship between the variables $Y$, $X$, as part of the larger system. $U$ captures all the unobservable factors in the system that affect the outcome $Y$ and potentially the treatment $X$. This means that $U$ and $X$ are \emph{not} assumed to be independent, as would be the case in a randomized controlled trial. This captures the fundamental problem of causal inference in this setting, \revision{as the dependence between $X$ and $U$ opens a back-door path \citep{pearl2009causality} that confounds the identification of the causal effect $X\to Y$} in Figure \ref{fig:DAG_structure}. 

This induced ``feedback'' loop is the classical \emph{endogeneity problem}, i.e., $X$ is endogenous to the system so that the effect $X\to Y$ cannot be extracted directly by simple prediction. Therefore, for identification in these settings one needs more information and stronger assumptions to close the back-door path $U\to X$, as I will discuss below in examples such as instrumental variable models and difference-in-differences estimation. Before doing this, we first need to introduce the foundational ingredients for identifying causal effects in this setting by outlining how identification of treatment effects has been based on the \emph{monotone rearrangement}, a solution to specific types of optimal transport problems.
\begin{figure}[h!]
    \centering

\begin{tikzpicture}[
    node distance=1cm, 
    box/.style={rectangle, draw, minimum width=0.8cm, minimum height=0.6cm}, 
    circle/.style={ellipse, draw, minimum width=0.8cm, minimum height=0.6cm}, 
    arrow/.style={->, >=Stealth} 
]

\node[box] (X) {\( X \)};
\node[box, right of=X, xshift=2cm] (Y) {\( Y \)}; 
\node[circle, above of=X, yshift=0.1cm, xshift=1.3cm] (U) {\( U \)}; 

\draw[arrow] (U) -- (X);
\draw[arrow] (U) -- (Y);
\draw[arrow] (X) -- (Y);

\end{tikzpicture}
    \caption{The DAG corresponding to \eqref{eq:struct_mod} illustrating the backdoor path through the unobservable $U$.}
    \label{fig:DAG_structure}

\end{figure}
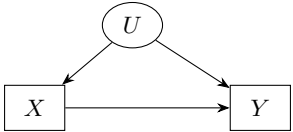

The model \eqref{eq:struct_mod} is deliberately general and in particular encompasses additively separable structures ($g(X,U) = f(X)+U$) and even more restrictive linear models ($g(X,U) = X^\top\beta + U$). The focus is also on identifying the mechanism $g(X,u)$ directly, i.e., the relationship $X\to Y$ while controlling for the unobservable $U=u$. The heterogeneity is captured by the unobservable $U$, which contains all characteristics that are unobserved to the statistician, but are of importance to the causal question of interest. Allowing for $g$ to depend nonseparably on $U$ allows for this heterogeneity to be of the most general form, without the need to introduce artificial structural assumptions like additive separability. It has been argued that additive separability is an artificial assumption in most settings of interest when dealing with human behavior \citep[e.g.][]{hoderlein2007identification}.

\subsection{Relation to counterfactuals and other causal questions}
Structural models also induce the classical counterfactuals  $Y(x)$ \citep{neyman1990application, rubin1974estimating, holland1988causal}. Here, $Y(x)$ is the counterfactual outcome if the treatment $X$ takes the value $x$. In the classical binary setting $X\in\{0,1\}$, $Y(1)$ is the counterfactual under treatment while $Y(0)$ is the counterfactual under no treatment. The fundamental problem of causal inference  for an independent and identically distributed sample $\{Y_i,X_i\}_{i=1}^n$ in the simple binary case $X\in \{0,1\}$ can be stated simply as 
\begin{equation}\label{eq:rubin}
Y_i = (1-X_i) Y_i(0) + X_i Y_i(1), 
\end{equation} that is, one can only ever observe one potential outcome in practice. Structural equations are sometimes argued to be more general than the counterfactual notation because they explicitly note the importance of the unobserved heterogeneity $U$ in the model \citep{pearl2009survey}. In particular, equivalent to structual equation models are the \emph{counterfactual processes} $Y_x(u)$, which denote the counterfactual outcome under treatment $x$ if the system is in the state $U=u$ \citep{balke1994counterfactual, pearl2009causality}. Below I show how those processes induce new and interesting questions that relate to optimal transportation---analyzing those can open connections between optimal transport, statistical physics, and causal inference \revision{\citep{gunsilius2025schrodinger}}.

The model \eqref{eq:struct_mod} is completely general in that it does not impose any structural assumptions on the system, and in particular $U$, \emph{a priori}. It is also different in focus and in a sense more general than the models in other areas of causal inference for machine learning \cite[e.g.][]{peters2017elements}. The difference is that the goal in our setting is not to analyze the system and to understand the relationships between all the variables in the system relegated to the error term $U$, but to simply isolate the causal effect $X\to Y$ among all the other moving parts (captured by $U$) in the system. Intuitively, for our questions of causal inference with observational data, we simply want to \emph{control} the unobservable variable $U$, while in other settings it might be interesting to explicitly model the relationships among all the unobservable variables and their effects on the outcome via directed acyclic graphs  \citep{pearl2009causality, peters2017elements}. Optimal transport is also being used in such settings \citep{cheridito2023optimal, tu2022optimal}, based on the idea of optimal transport between different processes \citep{backhoff2022stability}, exploiting the Markovian structure of DAGs. 

The difference between those two viewpoints is important from an applied re\-search\-er's perspective, as one often wants to be as agnostic as possible about the other relationships in the system and only focus on the one relationship between $X$ and $Y$. Consider \cite{imbens2020essay} for further reading about this distinction. Throughout, this review concentrates on the latter, i.e., extracting the effect $X\to Y$, not modeling the latent space. I now show how optimal transport has built the foundation for this, via the monotone rearrangement.

\subsection{Setup: identification and monotone rearrangement}
The key question is to identify, that is, isolate, the causal effect $X\to Y$ from the system. ``Identification'' is defined as injectivity of the observable law $P_{Y,X}$ induced by the model \eqref{eq:struct_mod} in unobservable quantities $g, P_U$, where the latter is the law of $U$ \citep{matzkin2003nonparametric, pearl2009causality}. The focus for identification is the function $g$, as it encodes the effect $X\to Y$. \revision{Without additional assumptions}, it is impossible to identify the mechanism $g$; in such cases one can only obtain bounds on specific functionals, as I show below. I now quickly outline the predominant assumption to identification in this setting and show how it is fundamentally based on optimal transportation.

Let $G$ be the set of all functions $g$ we consider and $\Gamma$ a set of laws $P_U$ of $U$ in question. Formally, the definition of identification is the following.
\begin{definition}[\cite{matzkin2003nonparametric}]\label{def:matzkin}
    The pair $(g, P_U)$ is identified in the set $(G \times \Gamma)$ if (i) $(g, P_{U}) \in (G \times \Gamma)$, and (ii) for all $(g', P'_{U})$, in $(G \times \Gamma)$,
\begin{multline*}
[P_{Y,X}(\cdot,\cdot; g, P_{U}) = P_{Y,X}(\cdot,\cdot; g', P'_{U})] \\\Rightarrow (g, P_{U}) = (g', P'_{U}).
\end{multline*}
\end{definition}
This means that the map from the observable joint distribution $P_{Y,X}$ of outcome and treatment is injective in the model: two different models generate different observable outcomes. This would lead to point-identification, i.e., obtaining a unique function $g(x,u)$.

\subsubsection{The basic idea in exogenous settings}\mbox{}\\
Before stating approaches for identification when $X$ is truly endogenous in the model, I focus on the case where $X$ is actually exogenous, that is, where $X$ and $U$ are independent (denoted in the following by $X\perp U$), so that the arrow $U\to X$ in Figure \ref{fig:DAG_structure} is absent. \revision{This setting is significantly simpler}, actually circumventing the main causal inference problem, but it serves to illustrate the first fundamental connection between optimal transport and such nonseparable representations. 

A simple sufficient condition for identification in the case where all variables $Y, X$, and $U$ are univariate, $F_U$, the cumulative distribution function (CDF) of the law of $U$ is absolutely continuous and $X\perp U$ is monotonicity of the causal mechanism. In fact, the main assumption that guarantees identification of the causal mechanism up to some equivalence class is \emph{continuity and strict monotonicity} of $g$ in $U$ for all $x$ \citep{doksum1974empirical}. This follows from the simple string of equalities for almost every $x$:
\begin{equation}\label{eq:matzkin_argument}
\begin{aligned}
    F_{Y|X=x}(y) = &P(Y\leq y|X=x) \\
    = & P(g(X,U)\leq y|X=x)\\
    =& P(U \leq g^{-1}(X,y)|X=x) \\
    =& F_{U}(g^{-1}(x,y)),
\end{aligned}
\end{equation}
where $F_{Y|X}$ is the conditional CDF of the observables $Y$ and $ X$. The invertibility of $g$ follows because it is assumed to be continuous and strictly increasing. 

This simple argument in the univariate case implies that the function $g$ is identified up to observational equivalence in the exogenous setting $X\perp U$; in this sense two functions $g,g'\in G$ are observationally equivalent if there exist $F_U, F'_U\in \Gamma$ such that they all induce the same observational distribution, i.e.,
\[F_{Y,X}(\cdot, \cdot; g, F_U) = F_{Y,X}(\cdot, \cdot; g', F'_U).\]
This implies that the entire mechanism $g$ is identified up to a monotone transformation, since any monotone transformation will provide an observationally equivalent mechanism. 

While this argument seems straightforward, its impact on the literature on identification of effects cannot be overstated. In fact, this idea is used in many fundamental approaches to identify causal effects, such as instrumental variable models \cite{imbens2009identification, heckman2005structural, torgovitsky2015identification} and extensions of the difference-in-difference method \citep{athey2006identification}, which I revisit below. 
It is thus interesting to analyze the limits and potential extensions of this idea. This is facilitated by the connection to optimal transportation.

\subsubsection{The monotone rearrangement} \mbox{}\\
The connection to optimal transport comes from the fact that the identification argument \eqref{eq:matzkin_argument} implies that $g(x,\cdot)$ is the \emph{monotone rearrangement} between the unobservable $F_U$ and the observable $F_{Y|X=x}$ for $P_X$-almost every $x$. That is, $g$ is the solution to the Monge problem \eqref{eq:monge} transporting $P_U$ to $P_{Y|X=x}$ for any symmetric convex cost function $c(|u-y|)$, which exists if $P_U$ does not have atoms (that is, gives positive mass to single points) \citep{villani2021topics}. 

The monotone rearrangement $g(x,\cdot)$ takes the form
\begin{equation}\label{eq;monotone_rearrangement}
g(x,u) = F^{-1}_{Y|X=x}\left(F_U(u)\right),
\end{equation}
where 
\(F^{-1}_Z(q) = \inf\left\{z\in\mathbb{R}: F_Z(z)\geq q\right\} \) is the quantile function of the random variable $Z$. It is an optimal transport map in the sense of Monge \eqref{eq:monge}, because (for fixed $x$) it maps every point $u$ in the support of $P_U$ to a point $y= T(u) = g(x,u)$ in the support of $P_{Y|X=x}$. Moreover, it is \emph{measure preserving}, meaning that the preimage of a measurable subset $B$ in the support of $Y|X=x$ gets mapped to a subset $A = T^{-1}(B)$ in the support of $U$ that has the same measure. Importantly, the monotonicity requirement makes this map the unique measure- and order preserving transformation, as depicted in Figure \ref{fig:monotone_rearrangement}. \revision{Note that measure-preservation is a consequence of the structural model $Y=g(X,U)$; any valid $g(x,\cdot)$ must push $P_U$ forward to $P_{Y|X=x}$. There are infinitely many such maps in general. It is the additional order-preservation of the monotone rearrangement that pins down $g$ uniquely and is thus the identifying assumption.}
\begin{figure}[h]
    \centering
    \begin{tikzpicture}[scale=0.7]

        \draw[->, thick] (0,0) -- (3.5,0) node[right] {$U$};
        \draw[->, thick] (0,0) -- (0,3.3) node[above] {$F_U$};
        
        \draw[thick] (0,0) to[out=30,in=190] (3,2.8);
        
        \draw[->,dashed, thick] (1.45,0) -- (1.45,0.7);
        \draw[dashed, thick] (1.45,0) -- (1.45,1.6);
        \draw[->,dashed, thick] (0,1.6) -- (4,1.6);
        \draw[dashed, thick] (4,1.6) -- (8.2,1.6);
        \draw[dotted, thick] (0,2.8) -- (8.2,2.8);

        \node[below] at (1.5,0) {$u_0$};
        \node[left] at (0,1.6) {$F_U(u_0)$};
        \node[left] at (0,2.8) {$1$};

        \draw[->, thick] (5,0) -- (8.5,0) node[right] {$Y | X = x$};
        \draw[->, thick] (5,0) -- (5,3.3) node[above] {$F_{Y|X=x}$};
        
        \draw[thick] (5,0) to[out=20,in=240] (8,2.8);

        \draw[->,dashed, thick] (7.2,1.6) -- (7.2,0.7);
        \draw[dashed, thick] (7.2,0.7) -- (7.2,0);
        \node[below] at (7.2,0) {$y$};
    \end{tikzpicture}
    \caption{Depiction of the monotone rearrangement $y=g(x,u_0)$$ = F^{-1}_{Y|X=x}(F_U(u_0))$}
    \label{fig:monotone_rearrangement}
\end{figure}
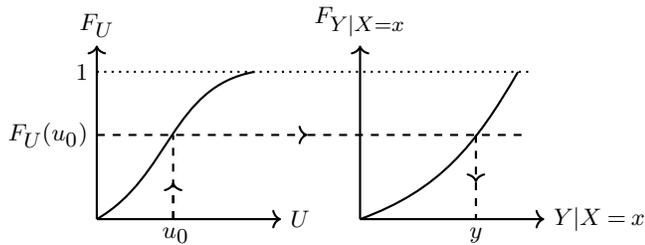

In this case, the monotonicity assumption for identification can be rephrased as: suppose $U$ is a univariate index accounting for the heterogeneity of the system in question and assume that the causal mechanism captured by $g(x,u)$ is such that for each hypothetical state $u$ of the system, the counterfactual outcome $Y(x)$ is created as the optimal transport coupling induced by a convex cost function. Then if the treatment is exogenous (i.e., $X\perp U$), the causal mechanism is the solution to the optimal transport problem with convex symmetric cost. 

\revision{It is worth spelling out what identifying the entire causal mechanism $g(x,u)$ achieves beyond population-level quantities such as the average treatment effect $\mathbb{E}[Y(1) - Y(0)]$. Knowledge of $g$ yields the full distribution of counterfactual outcomes $Y(x) = g(x, U)$ for each treatment level $x$, and thereby all distributional features: quantile treatment effects, the variance of effects, or the probability that the treatment harms an individual. The average treatment effect can be zero while treatment benefits some individuals and harms others; the mechanism $g$ captures this, while the ATE alone cannot. Moreover, the monotone rearrangement specifies which quantile of $Y(0)$ maps to which quantile of $Y(1)$, providing a coupling between counterfactual outcomes that provides a joint distribution of the potential outcomes. Finally, in the endogenous settings discussed in Sections 4--6, assumptions such as monotonicity are not just needed to identify $g$, but also for identifying even average effects.}

This simple connection \revision{allows one to} analyze and extend \eqref{eq:matzkin_argument} in many different ways. For one, it connects this argument to other areas, in particular to \emph{structural nested distribution models} \citep{ robins1989analysis, robins1992estimation, vansteelandt2014structural}, as pointed out in \cite{balakrishnan2023conservative}. Structural nested distribution models are \revision{generalizations} of structural nested mean models, where the relationship between counterfactual outcomes \revision{$Y(0)$ and $Y(x)$} is modeled by \revision{a parametric function} 
\[\revision{\alpha(y,x,l,\beta) = F^{-1}_{Y(0)|L=l,X=x}\left(F_{Y(x)|L=l, X=x}(y)\right),}\] \revision{known up to the parameter} $\beta$ \revision{and conditional on observed confounders $L=l$}. \revision{In short, for structural nested distribution models \citep{laan2003unified}, $\alpha$ is a parametric model for the optimal transport map between the counterfactual distributions, indexed by $\beta$. The parameter $\beta$ is identified from the observed data through g-estimation \citep{vansteelandt2014structural}: one defines the transformed outcome $U(\beta) = \alpha(Y, X, L, \beta)$, which ``undoes'' the treatment effect. Under ignorability, the assumption that $Y(x) \perp X \mid L$ for all $x$, the true $\beta$ is the value that makes $U(\beta)$ independent of treatment $X$ conditional on $L$ (a condition that is testable from the observed data).}

Moreover, the connection of \eqref{eq:matzkin_argument} to the monotone rearrangement links the identification argument to the classical Fr\'echet-Hoeffding bounds \citep{rachev2006mass1}. The coupling $\gamma$ induced by the monotone rearrangement $g(x,u)$, which solves the Kantorovich problem \eqref{eq:kant} has the CDF 
\[H(u,y;x) = \min\{F_U(u), F_{Y|X=x}(y)\}\] for $P_X$-almost every $x$. This is the upper Fr\'echet-Hoeffding copula, the copula that makes $F_U$ and $F_{Y|X=x}$ \emph{comonotone}, maximizing their dependence. 

This rephrasing clarifies the obvious restrictiveness of the monotonicity assumption in \eqref{eq:matzkin_argument}. In many settings, comonotonicity is reasonable. For instance when analyzing a hypothetical setting of the causal effect of the grade-point average in high-school ($X$) on future earnings ($Y$) of students, where the unobservable $U$ is understood as an ``index of ability'' of the student. In this case, it is reasonable to assume that students with higher ability will also earn more conditional on their GPA. 

\revision{However, it is important to note that the monotonicity assumption comes with strong restrictions: the resulting map is not only measure-preserving but also order-preserving. In a dynamic setting, for instance, this would imply that an individual at the $10^{th}$ percentile of the unobservable can never change their rank. More broadly, monotonicity is just one point in a larger space of assumptions on the causal mechanism $g$, indexed by the choice of cost function in the optimal transport problem. In one dimension, cost functions of the form $c(u-y)$ with $c$ convex, non-negative, and symmetric, yield the monotone rearrangement \citep[Theorem 2.18]{villani2021topics}. }

\revision{In contrast, concave cost functions lead to fundamentally different maps $g$ that are in general not monotone: as shown in \cite{mccann1999exact}, the resulting optimal maps exhibit a hierarchical structure that mixes local and long-range transport. Such mechanisms might be better suited for settings where extreme realizations of $U$ lead to qualitatively different outcomes. For instance, settings where very high or very low values of an unobservable index produce similar adverse effects. At present, monotonicity remains the dominant assumption largely due to its simplicity and interpretability. But realizing that these identification strategies are optimal transport problems in disguise opens the door to a richer family of assumptions, and potentially to sensitivity analyses that vary the cost function to assess the robustness of the identified causal mechanism.}

\subsubsection{Connection to the counterfactual notation}\mbox{}\\
\revision{The connection to the Fr\'echet-Hoeffding copulas also allows for a connection to Rubin's counterfactual model. Consider the binary case $X\in \{0,1\}$ for simplicity and maintain the assumption that $X\perp U$. By \cite{dawid1979conditional}, this implies $(Y(1), Y(0)) \perp X$, so that the marginal distributions of $Y(1)$ and $Y(0)$ are identified from the observed data by considering the two groups separately. The important question, however, is to identify the \emph{joint} distribution of the potential outcomes $(Y(0),Y(1))$, which captures the causal mechanism. Without further assumptions, this joint distribution cannot be identified, since by \eqref{eq:rubin} one only ever observes one potential outcome per unit. This is precisely a question of which coupling between $F_{Y(0)}$ and $F_{Y(1)}$ is the correct one.}

One can  obtain bounds on the causal mechanism of interest by the standard Fr\'echet-Hoeffding bounds. In the binary setting, this has been done in \cite{aronow2014sharp, fan2010sharp, heckman1995assessing}. The Fr\'echet-Hoeffding bounds for the joint distribution of $(Y(0),Y(1))$ are \citep[e.g.][Theorem 3.1.1]{rachev2006mass1}
\begin{multline}\label{eq:frechet}
F_{Y(0)}(y) + F_{Y(1)}(y')-1 \\ \leq F_{Y(0),Y(1)}(y,y')\\\leq \min \left\{F_{Y(0)}(y),F_{Y(1)}(y')\right\}. 
\end{multline}
Based on this, one can bound other expressions such as the covariance $\mathrm{Cov}(Y(0),Y(1))$ or any expression of the form $\mathbb{E}[c(Y(0),Y(1))]$ via
\begin{multline}
    \int_0^1 c(F^{-1}_{Y(0)}(u), F^{-1}_{Y(1)}(u))\, du\\ \leq \mathbb{E}[c(Y(0),Y(1)]\\\leq \int_0^1 c(F^{-1}_{Y(0)}(u), F^{-1}_{Y(1)}(1-u))\, du
\end{multline}
for any quasi-antitone cost function $c$, that is, 
\[c(x',y')+c(x,y)\leq c(x',y)+c(x,y')\] for any $x'\geq x$ and $y'\geq y$, which is a classic inequality proved in \cite{cambanis1976inequalities}. 
These bounds are wide in general, often too wide to be of use, but they do provide a minimal restriction on the respective data-generating process. 

I would be \revision{remiss} to not mention that recently, \cite{balakrishnan2023conservative} analyzed lower bounds on functionals of the joint distribution of all potential outcomes $Y(x)$ in a setting where the treatment $X$ is continuous. By solving a general optimal transport problem, i.e., minimizing functionals such as the ``quadratic effect''
\[
\mathbb{E} \left[ \int \int |Y(x) - Y(x')|^2 \, dP_X(x) \, dP_X(x') \right],
\]
where $P_X$ is the marginal distribution of the treatment $X$ and the expectation is over all joint laws of the counterfactual outcomes $\{Y(x)\}_{x\in \mathbb{R}}$ they obtain lower bounds on these effects. It can hence be seen as a generalization of the Fr\'echet-Hoeffding approach to continuous treatments. \revision{The reasoning is that the set of all joint laws consistent with the identified marginals contains the true (unknown) joint law; minimizing over this set therefore yields a value no larger than the true causal effect, i.e., a sharp lower bound.}

The same issue as above holds here: the bounds are usually wide without additional assumptions. Moreover, it is not always clear when a quadratic effect like this is interesting---recall that the assumption of the monotone rearrangement as the causal mechanism allows one to identify the entire mechanism, not just functionals of it. Below in section \ref{sec:IV_bounds} I make a connection to the problem of bounding treatment effects in instrumental variable models \citep{balke1994counterfactual, balke1997bounds}, where more information is introduced into the above problem via linear constraints, leading to tighter bounds and a novel optimal transport adjacent optimization problem on path spaces. 

\subsubsection{Monotone rearrangement as the foundation of causal inference with observational data}\mbox{}\\
The monotone rearrangement has been the workhorse for the identification of causal effects in a variety of different areas. Most likely, the main reason for this is its interpretability, it being an order preserving map. We have also seen, however, that it is in general \emph{not} enough to identify the causal effects without stronger assumptions. First, in the structural setting with $Y=g(X,U)$, we had to assume that $X\perp U$, which is the simple setting of a randomized controlled trial, circumventing the causal identification problem entirely---and then we could only identify the mechanism up to a monotone transformation. \revision{Second, even in the exogenous setting, from the perspective of the counterfactual notation, the monotone rearrangement only provides bounds on the joint distribution of the potential outcomes in general.} On the other hand, the monotone rearrangement is the fundamental connection to works on the intersection of optimal transport and causal inference \cite{de2024transport} that argue that the fundamental model between potential outcomes should be induced by optimal transport: in a way, this has always been the case, but as a connection between unobservables and outcome, not between the counterfactual distributions directly. 

In the following, I therefore have two goals. First, I want to show how the realization that the monotone rearrangement is the solution to certain one-dimensional optimal transport problems can be used to generalize causal inference to higher dimensions and more realistic settings. Second, I want to introduce and generalize classical approaches such as instrumental variable estimation, difference-in-differences, and synthetic controls to \emph{uniquely identify} either the entire causal mechanism $g(x,u)$---or causal effects based on it---in the case where $X$ is endogenous, i.e., $X\not\perp U$, by exploiting additional structure and assumptions in the problem.

\subsubsection{Problems with extension to more general settings}
Realizing the connection between \eqref{eq:matzkin_argument} and optimal transportation, there is now a straightforward way to generalize the setting to higher dimensions. That is, instead of requiring that $U$ and $Y$ be univariate and $g(x,u)$ be the monotone rearrangement between $U$ and $Y|X=x$, we can assume that $U$ and $Y$ are multivariate \emph{of the same dimension} and that $g(x,u)$ is the optimal transport map for the quadratic cost function $c(u,y) = |u-y|^2$. Considering $x$ fixed, by Brenier's theorem \citep[Theorem 2.12]{brenier1991polar, villani2021topics}, this map is uniquely defined $P_U$-almost everywhere and takes the form of the gradient of a convex function, that is $g(x,u) \equiv T_x(u) = \nabla\varphi_x(u)$ for some convex function $\varphi_x:\mathbb{R}^d\to\mathbb{R}$. Identification would then come from the fact that the Brenier map is the unique mapping with this monotonicity property, analogous to the monotone rearrangement. 

Brenier's theorem has been the starting point for a recent explosion in interest in applications of static optimal transportation in statistics \citep[e.g.][]{carlier2016vector, chernozhukov2017monge, chernozhukov2021identification, del2024nonparametric, fan2022lorenz}. The reason being that the gradient of a convex function is a ``natural'' generalization of a monotone function. In fact, it is not only monotone as a map $\mathbb{R}^d\to\mathbb{R}^d$, in the sense that 
\[\langle T_x(u) - T_x(u'), u-u'\rangle\geq 0\quad\text{for all $u,u'\in \mathbb{R}^d$},\] where \(\langle \cdot, \cdot \rangle\) denotes the inner product on \(\mathbb{R}^d\times\mathbb{R}^d\), but it actually possesses a \emph{cyclically monotone} support \citep[Theorem 24.8]{rockafellar1970convex}. 
A map \( T_x : \mathbb{R}^d \to \mathbb{R}^d \) is said to be cyclically monotone if for any positive integer \( m \) and any cycle \( u_1, \ldots, u_m, u_{m+1} \equiv u_1 \) in its domain, it holds

\[
\sum_{i=1}^{m} \langle u_i, T_x(u_i) - T_x(u_{i+1}) \rangle \geq 0.
\]
In the univariate setting, monotonicty and cyclic monotonicity coincide, but cyclic monotonicity provides more structure in multivariate settings, as the inequality has to hold for all finite $m$-cycles, while monotonicity only requires this for $m=2$. \revision{To illustrate: any map of the form $T_x(u) = Au$ with $A$ symmetric positive semidefinite is cyclically monotone, since it is the gradient of the convex function $\varphi(u) = u^\top A u/2$. More generally, the Brenier map between two multivariate Gaussians $\mathcal{N}(\mu_0, \Sigma_0)$ and $\mathcal{N}(\mu_1, \Sigma_1)$ is cyclically monotone: it is an affine map $T(u) = \mu_1 + A(u - \mu_0)$ where $A = \Sigma_0^{-1/2}(\Sigma_0^{1/2}\Sigma_1\Sigma_0^{1/2})^{1/2}\Sigma_0^{-1/2}$, which is again the gradient of a convex function. In contrast, a monotone but \emph{not} cyclically monotone map---such as a rotation---would not arise as an optimal transport map for the quadratic cost. Cyclic monotonicity thus restricts the class of admissible mechanisms $g$ more tightly than monotonicity alone, and it is exactly this additional structure that yields uniqueness of the Brenier map and hence identification.} Below, I show how cyclic monotonicity is useful in the identification of individual effects. 

While this extension is straightforward and useful in many areas of statistics, it is less useful in our setting. The main issue is the interpretation of the requirement that $U$ and $Y$ have to be of the same dimension. First, in causal inference settings, it is less common to have multivariate outcomes. Second, the interpretation of the unobservable in  a multivariate setting is difficult. $U$ is designed to contain any covariate that can theoretically affect the outcome of interest and is not observable. Therefore, either assuming $U$ is infinite dimensional or that $U$ is univariate seem reasonable. In the first case, it is a general and all-encompassing error term. In the second case it is an index that contains the information of all relevant unobservables (for instance an ``ability''-index in the GPA example). The choice that $U$ needs to be multivariate and of the same dimension as $Y$ is therefore artificial in most applications.

In this sense, it is more realistic to assume $Y$ is univariate or low-dimensional while $U$ is higher-dimensional. Such generalizations have been analyzed in \cite{chiappori2017multi, mccann2020optimal}. The issue here is that unlike the case where the dimensions of $Y$ and $U$ align, the existence (and uniqueness) of an optimal transport map $g(x,u)$ between $u\in\mathbb{R}^d$ and $y\in \mathbb{R}$ relies on the shapes of $P_U$ and $P_Y$ as well as the cost function $c(u,y)$ in the optimal transport problem \citep[Theorem 4.(b)]{chiappori2017multi}. When such a map exists in the setting $Y\in\mathbb{R}$, the authors call the corresponding model \emph{nested}. The fact that the existence of $g(x,u)$ depends on the geometries of $P_{Y|X=x}$ and $P_U$ is interesting in general causal inference problems. It suggests that there are important differences in models that allow to identify causal models in these settings. Also, generalizing the results in \cite{chiappori2017multi, mccann2020optimal} to allow for infinite dimensional $U$ is interesting. 

While the standard generalization of the monotone rearrangements to Brenier maps in multiple dimensions is often not a realistic model for the relation between the outcome $Y$ and the unobservables $U$ in the system, it has proven to be very helpful and illuminating in actual approaches to causal inference when $X\not\perp U$; and there multivariate extensions are in fact important, as they allow to consider multiple endogenous treatments. I now review three of these approaches. The first is the \emph{method of instrumental variables}, the second is difference-in-differences, and the third is synthetic controls. 

\section{Instrumental variable models: identification, bounds, robustness}
\revision{This section retains $X$, $Y$, $U$ from above and introduces $Z$ for the instrument, $W$ for the first-stage unobservable, and $V$ for the second-stage unobservable.}

\revision{This} is the first instance where we start dealing with the true causal inference problem depicted in Figure \ref{fig:DAG_structure}, i.e., where $X\not\perp U$. As indicated above, it is not possible to point-identify the structure of $g(x,u)$ without additional assumptions, \emph{even if $g$ is assumed to be the monotone rearrangement}. The reason is the dependence between $X$ and $U$, the backdoor channel (Figure \ref{fig:DAG_structure}). It implies that the observed conditional measure $P_{Y|X=x}$ is not the counterfactual measure of interest, which is denoted by $P_{Y|X^*}$, indicating that this is the conditional measure for the hypothetical scenario where $X\sim X^*$ but $X^*\perp U$. The observed joint measure $P_{Y,X}$ is expressed as
\[
    P_{Y, X} = \int P_{Y|X, U=u} P_{X|U = u} P_U(du).
\]
If $X$ were actually exogenous, that is, $X\perp U$, we would have $P_{X|U} = P_{X}$, so that we could identify $P_{Y|X^*}$ by

\begin{align*}
    P_{Y|X} =& \frac{P_{Y, X}}{P_{X}} 
    = \frac{P_{X} \int P_{Y|X, U = u} P_U(du)}{P_{X}} = P_{Y|X^*}.
\end{align*}
Since $X^*$ is unobservable, one idea is to find a random variable $Z$ that has its properties. In short, we would want a variable $Z$ that has the same distribution as $X$ but is itself independent of the unobservable $U$. This is too strong a restriction, so we require $Z$ to be an \emph{instrumental variable} with the following properties:
\begin{definition}
    A random variable $Z:\Omega\to\mathbb{R}$ is an instrument for the endogenous variable $X$ in the model \eqref{eq:struct_mod} if 
    \begin{enumerate}
        \item[(i)] (Relevance) $Z\not\perp X$,
        \item[(ii)] (Independence) $Z\perp U$,
        \item[(iii)] (Exclusion) The only influence $Z$ has on the outcome $Y$ is via $X$. 
    \end{enumerate}
\end{definition}
If $Z$ satisfies (i) - (iii), we call it \emph{valid} \citep{imbens2015causal, pearl2009causality}. Incorporating an instrument into our model, we can extend \eqref{eq:struct_mod} to the following structural model.
\begin{equation}\label{eq:struct_IV}
    \begin{aligned}
        Y &= g(X,V)\\
        X &= h(Z,W), \quad Z\perp (W,V),
    \end{aligned}
\end{equation}
where we have split the unobservable error term $U$ into an error term $W$ of the \emph{first stage} and an error term $V$ of the second stage. This model is equivalent to a model where $U$ appears in both the first- and second-stage, but is more convenient for the derivations below.

This model implicitly contains all the information we require from an instrumental variable model. Relevance is given if $h$ is not constant in $Z$ for every $u$. Independence is enforced by $Z\perp U$, and exclusion is captured by the fact that $g$ is not a function of $Z$. One can depict \eqref{eq:struct_IV} more elegantly as a DAG \citep{pearl2009causality} as in Figure \ref{fig:DAG_IV}.
\begin{figure}[h!]
    \centering
    \begin{tikzpicture}[
        node distance=2cm, 
        box/.style={rectangle, draw, minimum width=0.8cm, minimum height=0.6cm}, 
        circle/.style={ellipse, draw, minimum width=0.8cm, minimum height=0.6cm}, 
        arrow/.style={->, >=Stealth} 
    ]
    
    \node[box, xshift=-1cm] (Z) {\( Z \)}; 
    \node[box, right of=Z, xshift=1cm] (X) {\( X \)};
    \node[box, right of=X, xshift=1.5cm] (Y) {\( Y \)};
    
    \node[circle, above of=X, xshift=0.4cm] (W) {\( W \)};
    \node[circle, above of=X, xshift=2cm] (V) {\( V \)};
    
    \node[draw, dashed, ellipse, fit=(W) (V), inner sep=0.4cm] (U) {};
    \node[above of=U, yshift=-1.5cm] {\( U \)};
    
    \draw[arrow] (Z) -- (X);
    \draw[arrow] (X) -- (Y);
    \draw[arrow] (W) -- (X);
    \draw[arrow] (V) -- (Y);
    \draw[arrow] (W) -- (V);
    
    \end{tikzpicture}
    \caption{The DAG corresponding to \eqref{eq:struct_IV}.}
    \label{fig:DAG_IV}

\end{figure}
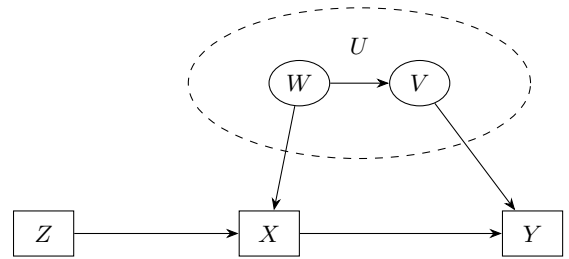

To understand the intuitive idea, in a simple linear setting, i.e., where $g(X,V) = \alpha X+ V$ and $h(Z,W) = \beta Z+ W$, one can identify the average causal effect $ X\to Y$ by the coefficient $\beta^{IV}$, which can be obtained via the Wald estimator $\beta^{IV} = \frac{\mathrm{CoV}(Y,Z)}{\mathrm{CoV}(X,Z)}$. This estimator provides an intuitive explanation in how instruments can be used to identify the causal effect even in more general settings. Since $Z$ does not have a direct influence on $Y$ and is itself independent of $U$, any correlation between $Y$ and $Z$ must go through (or in causal inference speak: ``is mediated by'') $X$. Therefore, to identify the causal effect in this setting, we only have to normalize this relationship by the relation between $X$ and $Z$ to identify the (linear) average causal effect $X \to Y$. 

While similar, this argument is significantly more complex in the general nonseparable setting \eqref{eq:struct_IV}, which I focus on now. I briefly start with the basic idea of \emph{control variables}, which captures the same information as the unobservable $U$. This allows one to identify average effects \citep{imbens2009identification}. I show how optimal transportation can be used to explore the limitations of this approach and provide a novel generalization. Then I show how arguments using fixed-point iterations \citep{torgovitsky2015identification, d2015identification} allow to identify heterogeneous causal effects. I then see how far these can be extended to general dimensions and what the limitations are \citep{gunsilius2023condition}. In the process I introduce some interesting dynamic mathematical properties of optimal transport maps that warrant further investigation. Then I turn to the partial identification case \citep{balke1994counterfactual,balke1997bounds}, where the goal is to obtain bounds on an average treatment effect. I argue that there are also some so far unexplored connections to optimal transportation in this setting that can be valuable to analyze. Finally, I want to mention the connection of distributionally robust optimization approaches \citep{blanchet2019robust} to instrumental variables estimation, which have recently been explored in linear models \cite{qu2024distributionally}.

\subsection{Control variables}
From the classical backdoor criterion \citep{pearl2009causality}, we know that conditioning on $W$ breaks the backdoor and allows us to extract the correct causal effect $X\to Y$: fixing \( W \) by conditioning on a realization $W=w$ makes \( X = h(Z,w) \) just a function of the exogenous \( Z \), so that by varying \( Z \) one can obtain the exogenous effect of \( X \) on \( Y \) \revision{since $Z$ is assumed to not directly influence $Y$}. The issue is that $W$ is unobservable, so that conditioning on it is impossible. 
\revision{A motivating example for this setting is gasoline demand \citep{imbens2009identification}: the endogenous variable is total household expenditure, the instrument is gross earnings, and the outcome is the gasoline budget share. The nonseparable model is needed because different households, indexed by different realizations of $W$, respond differently to changes in income, heterogeneity that a standard linear IV model cannot capture.} This is where the control variable approach comes in \citep{imbens2009identification}. 
\begin{definition}\label{def: control_var}
A random variable $R:\Omega\to \mathbb{R}^d$ is a control variable in the model \eqref{eq:struct_IV} if $X\perp V | R$.
\end{definition}

A control variable mimics the conditioning on $W$. Therefore, it needs to be constructed such that conditioning on $W$ is the same as conditioning on $R$, i.e., their induced $\sigma$-algebras need to coincide. Doob's functional representation \revision{\citep[e.g.][Lemma 1.1.3]{kallenberg1997foundations}} implies that their induced $\sigma$-algebras coincide if there exist maps $\xi,\rho$ such that $R=\xi(W)$ and $W=\rho(R)$. In particular, these maps \emph{may not} depend on any other variables $Y,Z,V$. \revision{The authors who introduced this idea in \cite{imbens2009identification}}, require \( W \) and \( X \) to be univariate and \( h(z,W) \) to be the monotone rearrangement  in \( W \) for all $z$. Then they construct a control variable via 
\begin{equation}\label{eq:control_fct}
    R = F_{X|Z}(X) = F_W(h^{-1}(Z,X)) = F_W(W).
\end{equation}
They then assume \( F_W \) to be itself strictly increasing and continuous, that is, invertible, which makes \( R \) a valid control variable with the fixed function $F_W$. 

Using this control variable approach, one can now identify average structural effects $\Lambda(X\to Y)$ of the form 
\[\Lambda(X\to Y) = \int \Lambda(g(x,v)) dP_V(v),\] which are unobservable because $P_V$ is unobservable. \revision{Note that unlike in Section 3, the control variable approach does not identify the full mechanism $g(x,v)$, only averages over the unobservable $V$. This is because the endogeneity of $X$ prevents direct access to the counterfactual marginals; identifying the full mechanism requires additional structure, as discussed below.}

\revision{To identify the average structural effects, one must be able to 
integrate over the full distribution of the control variable $R$. This 
requires the \emph{large support assumption}:}
\revision{\begin{definition}[Large support assumption]
For all realizations $x$ of $X$, the support of the conditional 
distribution $P_{R|X=x}$ equals the support of the marginal 
distribution $P_R$.
\end{definition}}

\revision{This condition is related to, but distinct from, the 
overlap (positivity) assumption familiar in causal inference. Overlap 
requires that each unit has positive probability of receiving each 
treatment level; the large support assumption instead requires that the 
instrument generates sufficient variation in the control variable $R$ 
at every value of the endogenous variable $X$. It is a 
completeness-type condition on the instrument's strength, and is 
typically violated when the instrument has finite support 
\citep{torgovitsky2015identification}.}

\revision{With this definition} one can compute
\begin{align*}
     &E(\Lambda(Y) | X = x, R=r) \\
    = &\int \Lambda(g(x, v)) dF_{V | X=x,R=r}(v)\\
    = &\int \Lambda(g(x, v)) dF_{V | R=r}(v) .
\end{align*}
where the second equality follows from the fact that $R$ is a control variable. Then by the large support assumption on $R$ one can integrate over the marginal distribution of $R$ to get
\begin{align*}
\int &E(\Lambda(Y) | X = x, R=r)dF_R(r) \\
= &\iint \Lambda(g(x, v)) dF_{V | X=x,R=r}(v)dF_R(r)\\
    = &\iint \Lambda(g(x, v)) dF_{V | R=r}(v)dF_R(r) \\
    =& \int \Lambda(g(x, v)) dF_{V}(v).
\end{align*}
As can be seen from the above argument, if the large support assumption does not hold, one can only identify the effect on the set where the supports of $P_{R|X=x}$ and $P_R$ overlap for all $x$, which can be significantly smaller. 

The large support assumption is testable in practice and is usually very badly violated \citep{heckman1990varieties}, not least since many instruments have finite support \cite{torgovitsky2015identification}. In particular, the fact that the control variable is by construction uniform is a restricting factor.

The knowledge that this identification result is based on optimal transportation allows us to characterize the full class of valid control variables by replacing the map \( F_{X|Z}(X) \) by a more general function \( m^{-1}(X,Z) \) to generate an $R$ which need not have a uniform distribution. The important requirement on \( m^{-1} \) is that it can be written as \( m^{-1}(X,Z) = T(h^{-1}(Z,X)) \) for some measure-preserving isomorphism \( T \), which \textit{must not depend on} \( z \). A measure-preserving isomorphism $T$ is a map that is measurable and preserves the measure whose inverse exists and is also measurable and measure-preserving. In this case, the two $\sigma$-algebras induced by $R$ and $W$ coincide, so that $R$ is a valid control variable. 
Using this idea, we have the following simple generalization of the control variable approach.  

\begin{proposition}[Generalized control variables] \label{prop:gen_ctrl_var} Let \( F_W \) be absolutely continuous and strictly increasing and let \(h(z,w)\) be the monotone rearrangement between $F_W$ and $F_{X|Z=z}$ for all $z$. If \( F_{X|Z=z} \) is continuous in $x$ for all $z$, then any univariate random variable \( R \) independent of \( Z \) for which there exists a measure-preserving isomorphism $T:[0,1] \to\mathbb{R}$ mapping the uniformly distributed $\tilde{R} = F_{X|Z}(X)$ to $R$ can be made a control variable by setting
\[m^{-1}(X,Z) = T(F_{X|Z}(X)).\]
\end{proposition}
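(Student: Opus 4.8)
The plan is to reduce the claim ``$R$ is a control variable'' to two facts: (a) conditioning on $W$ already breaks the backdoor path, i.e.\ $X\perp V\,|\,W$, and (b) the $\sigma$-algebra generated by $R$ coincides, up to $P$-null sets, with that generated by $W$. Definition~\ref{def: control_var} then follows immediately, since conditional independence depends only on the conditioning $\sigma$-algebra.

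For (a), I would argue as in the backdoor discussion preceding the proposition: from $Z\perp(W,V)$ one gets $Z\perp V\,|\,W$, and for each fixed realization $W=w$ the variable $X=h(Z,w)$ is a deterministic measurable function of $Z$ alone, hence $X\perp V\,|\,W=w$; letting $w$ range over the support gives $X\perp V\,|\,W$. This step is standard.

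The substantive step is (b). First I would use the hypothesis that $h(z,\cdot)$ is the monotone rearrangement between $F_W$ and $F_{X|Z=z}$, i.e.\ $h(z,w)=F^{-1}_{X|Z=z}(F_W(w))$, together with continuity of $F_{X|Z=z}$ (so that $F_{X|Z=z}\circ F^{-1}_{X|Z=z}=\mathrm{id}$ on $(0,1)$) and continuity of $F_W$ (so that $F_W(W)\in(0,1)$ almost surely and is $\mathrm{Unif}[0,1]$-distributed), to compute
\[
\tilde R = F_{X|Z}(X) = F_{X|Z}\bigl(h(Z,W)\bigr) = F_{X|Z=Z}\bigl(F^{-1}_{X|Z=Z}(F_W(W))\bigr) = F_W(W)
\]
almost surely; in particular $\tilde R$ does not depend on $Z$ and is uniform on $[0,1]$. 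Since $F_W$ is strictly increasing, continuous and absolutely continuous, it is a bimeasurable bijection from the essential support of $W$ onto $(0,1)$ with measurable inverse given by the quantile function $F_W^{-1}$; hence $\tilde R=F_W(W)$ and $W=F_W^{-1}(\tilde R)$ almost surely, so $\sigma(\tilde R)=\sigma(W)$ modulo null sets. Finally, because $T:[0,1]\to\mathbb{R}$ is a measure-preserving isomorphism from the law of $\tilde R$ to the law of $R$, it is bimeasurable with measurable, measure-preserving inverse $T^{-1}$, so that $R=m^{-1}(X,Z)=T(\tilde R)$ and $\tilde R=T^{-1}(R)$ almost surely, giving $\sigma(R)=\sigma(\tilde R)=\sigma(W)$. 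Combining (a) and (b) yields $X\perp V\,|\,R$, i.e.\ $R$ is a control variable; and since $R=T(F_W(W))$ is a function of $W$ alone --- the requirement from Doob's functional representation --- while $W\perp Z$, the independence $R\perp Z$ holds as well.

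I expect the main obstacle to be purely measure-theoretic bookkeeping in step (b): making precise the sense in which $F_W$ is ``invertible'' (it is a bijection onto $(0,1)$ only off a $P$-null set, and one must track the behaviour of $F^{-1}_{X|Z=z}$ and $F_W$ near $0$ and $1$), ensuring joint measurability of $(z,x)\mapsto F_{X|Z=z}(x)$ so that $\tilde R$ is a genuine random variable, and keeping ``measure-preserving isomorphism'' interpreted modulo null sets so that all $\sigma$-algebra identities --- and hence the conditional-independence conclusion --- hold $P$-almost surely. None of these points is deep, but they are exactly where an imprecise argument would fail.
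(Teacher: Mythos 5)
Your proposal is correct and follows essentially the same route as the paper's proof: you derive $\tilde R=F_{X|Z}(X)=F_W(W)$ from the monotone-rearrangement hypothesis, use strict monotonicity and continuity of $F_W$ together with the measure-preserving isomorphism $T$ to conclude $\sigma(R)=\sigma(W)$ up to null sets, and then transfer $X\perp V\,|\,W$ to $X\perp V\,|\,R$. The only cosmetic difference is that the paper verifies the final conditional independence by an explicit factorization of $E[a(X)b(V)\,|\,R]$ with bounded test functions and the identity $E[a(X)\,|\,V,W]=E[a(X)\,|\,W]$, which is just the detailed version of your steps (a) and the $\sigma$-algebra substitution.
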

\begin{proof} The first part is the same as the proof of Theorem 1 in \cite{imbens2009identification}. Let \( h^{-1}(x, z) \) denote the inverse function of \( h(z, w) \). Then, by \eqref{eq:matzkin_argument}, it holds
\[
F_{X|Z=z}(x) =  F_{W}(h^{-1}(x, z)).\] Plugging in the random variables $X,Z$ into this expression gives
\[
\tilde{R} = F_{X|Z}(X) = F_{W}(h^{-1}(Z, X)) = F_{W}(W),
\]
where $\tilde{R}$ is now a uniformly distributed random variable, i.e., $\tilde{R}\sim U[0,1]$. Now fix a random variable $R$ for which there exists a measure-preserving isomorphism $\tilde{R}\mapsto T(\tilde{R})=R$. Set 
\[m^{-1}(X,Z) = T(F_{X|Z}(X)),\] which implies 
\begin{align*}
    R = m^{-1}(X,Z)=&
     T\left(F_W(h^{-1}(Z,X))\right) \\
    =& T\left(F_W(W)\right).
\end{align*}

Since $F_W$ is strictly increasing and continuous, it is the unique monotone rearrangement between $W$ and $\bar{R}$, and is measurable with measurable inverse since both $\bar{R}$ and $W$ do not give mass to points. In particular, $F_W$ does not depend on $X$ or $Z$. But the composition $T\circ F_W(W)$ is then a measure-preserving isomorphism that does not depend on $X$ or $Z$.
Hence, the $\sigma$-algebra induced by \( W \) is equal to the $\sigma$-algebra induced by \( R \), so that conditional expectations given \( W \) are identical to those given \( R \). Also, for any bounded function \( a(X) \), by $Z\perp (V,W)$
\[
E[a(X) | V,W] = \int a(h(z, W)) dF_Z(z) = E[a(X) | W].
\]
Therefore, for any bounded function \( b(V) \), I have
\begin{align*}
E[a(X) b(V) | R] =& E[a(X) b(V) | W]\\
=&E[b(V) E[a(X) | V,W] | W]\\
=& E[b(V) E[a(X) | W] | W]\\
=& E[b(V) | W] E[a(X) | W]\\
=& E[b(V) | R] E[a(X) | R]
\end{align*}
which shows that $R$ is a control variable.
\end{proof}
Proposition \ref{prop:gen_ctrl_var} is a straightforward extension of the control variable approach once one realizes that the identification idea is again based on the monotone rearrangement. \revision{Since $T$ preserves the $\sigma$-algebra, the choice of $T$ affects neither identification nor first-order asymptotic efficiency; however, when $F_{X|Z}$ is estimated, $T$ enters the higher-order bias of the generated regressor, so that the choice may matter in finite samples.}

A generalization of Proposition \ref{prop:gen_ctrl_var} to multivariate settings is essentially impossible since it is no longer the case that the half-open rectangles $(-\infty,w]$ and $(-\infty,w']$ necessarily have different probabilities if $w\neq w'$ for strictly increasing $F_W$, due to the lack of a natural complete order on $\mathbb{R}^d$. For instance, under the assumption that $h(z,w)$ is the gradient of some convex function \revision{($h(z,w) := \nabla\varphi_z(w)$),} the important requirement 
\[
P_W(h^{-1}(z, A)) = P_{X|Z=z}(A) = P_R(m^{-1}(z, A)) 
\]
does not hold for arbitrary Borel sets $A\subset \mathbb{R}^d$ and all $z$ unless $R=W$, in which case the control variable approach is obsolete. The proof for this claim is  omitted. Hence, one of the most influential approaches to identify causal effects in general instrumental variables seems to be restricted to univariate first stages; it would be interesting to investigate this further. Note, however, that there is no restriction on the second stage $Y=g(X,V)$, the stage of interest. 

\subsection{Instruments with small support and dynamics of Brenier maps}
As mentioned, the large support assumption is by definition violated when the instrument $Z$ is not continuous \citep{torgovitsky2015identification}. In the following, I consider the case where Z is binary and can only take two values $z, z'$; the result can be straightforwardly extended to finitely many realizations. Such settings are ubiquitous in practice \citep{imbens2007nonadditive, torgovitsky2015identification}. Moreover, it would be nice to allow for several endogenous variables $X$ and not just one, meaning that it is useful to generalize the first stage in \eqref{eq:struct_IV} to a multivariate setting. Finally, the control variable approach only provided identification for average structural effects $\Lambda(X\to Y)$, but not the entire mechanism $g(X,V)$, which would give us full identification of the heterogeneous treatment effects. 

To incorporate all of these modifications, we again start with the case where all variables are univariate. This has been introduced in \cite{torgovitsky2015identification} and \cite{d2015identification}. The univariate model is the one where all variables, observable and unobservable, are univariate and where both \( g \) and \( h \) are assumed to be monotone rearrangements in $V$ and $W$, respectively. Using the counterfactual notation \citep{rubin1974estimating}, we can write $F_{Y(X)}$ as the counterfactual distribution of the effect of $X$ on $Y$ for an \emph{exogenous} change in $X$ (i.e., the counterfactual distribution). Due to the backdoor channel via $V$ as depicted in Figure \ref{fig:DAG_IV}, this is not observable and the observable distribution $F_{Y|X}$ does not coincide with the counterfactual distribution. 
 
 The idea for identification of $g$ is as before: vary \( Z \) in such a way that \( X \) varies but \( V \) stays constant.
 The fact that this is possible if \( Z \) is only binary was first shown in \cite{torgovitsky2015identification} and \cite{d2015identification}. The argument of the former is more amenable to our ideas and rests on the fact that using the binary instrument \( Z \) with realizations \( z \) and \( z' \), there are two maps that do not change the distribution \( F_{V} \) of \( V \), but change values of \( X \). Using only those two maps therefore captures the exogenous effect of \( X \) on \( Y \). These two maps are depicted in Figure~\ref{fig:torgovitsky}, which is taken from \cite{gunsilius2023condition}.
 \begin{figure}[h!]
\centering
\begin{tikzpicture}[scale=0.7]
\draw[->, thick] (0,0) to (0,4);
\draw[->,thick] (0,0) to (4,0);
\draw[-,thick] (0,0) to [out=15,in=180] (4,4);
\draw[thick] (0,0) to [out=90,in=220] (4,3.5);
\draw node[right] at (4,0) {$x$};
\draw[-,thick] (0.5,0) to (0.5,-0.1);
\draw node[below] at (0.5,-0.1) {$x_0$};
\draw[-,dashed] (0.5,0.2) to (0.5,1.38);
\draw[-,thick] (0,0.2) to (-0.1,0.2);
\draw node[left] at (-0.1,0.2) {$F_{X|Z=z}(x_0)$};
\draw[-,dashed] (0.52,1.38) to (1.5,1.38);
\draw[-,thick] (1.55,0) to (1.55,-0.1);
\draw node[left] at (-0.1,1.32) {$F_{X|Z=z}(Tx_0)=F_{X|Z=z'}(x_0)$};
\draw node[below] at (1.55,-0.1) {$Tx_0$};
\draw[-,thick] (0,1.32) to (-0.1,1.32);
\draw[-,dashed] (1.55,1.38) to (1.55,2.1);
\draw[-,dashed] (1.55,2.1) to (1.9,2.1);
\draw[-,dashed] (1.9,2.1) to (1.9,2.28);
\draw[-,thick] (0,2.1) to (-0.1,2.1);
\draw node[left] at (-0.1,2.1) {$F_{X|Z=z}(T^2x_0)=F_{X|Z=z'}(Tx_0)$};
\draw node[right] at (1.95,-0.25) {$x^*$};
\draw[-,dotted, thick] (2.05,0) to (2.05,2.4);
\draw[-,thick] (2.05,0) to (2.05,-0.1);
\draw node[right] at (2.3,0.8) {$F_{X|Z=z}$};
\draw[->,thick] (2.3,0.9) to (1.43,1.17);
\draw node[below] at (3.6,2.4) {$F_{X|Z=z'}$};
\draw[->,thick] (3.4,2.4) to (3.45,3);
\end{tikzpicture}
\caption{Fixed-point iteration in a univariate framework for identifying causal effects in \eqref{eq:struct_IV}.}
\label{fig:torgovitsky}
\end{figure}
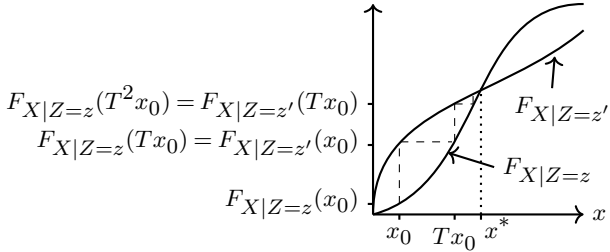

The first map changes \( z \mapsto z' \) for a fixed \( x \), i.e. switches the distributions \( F_{X|Z=z}(x) \) and \( F_{X|Z=z'}(x) \) (the ``vertical'' map in Figure~\ref{fig:torgovitsky}). The fact that \( F_{V|X,Z} \) is not affected by this follows from the control variable approach introduced above \citep{imbens2009identification} and because \( Z \) has no effect on \( g \) due to the exclusion restriction of \( Z \). A control variable \( R \) can be constructed via \( R = F_{X|Z}(X) \). Hence, by conditioning on \( R \) and the fact that \( Z \) is independent of \( (V, W) \), the vertical shift changes \( Z \) but does not change \( X \), which therefore does not affect the function \( g(x,v) \) we want to identify \citep{torgovitsky2015identification}.

The second map is the change of quantiles (the ``horizontal'' map in Figure \ref{fig:torgovitsky}), which follows by the fact that the change \( (x, z') \to (Tx, z) \) is performed in such a way that \( F_{X|Z=z'}(x) = F_{X|Z=z}(Tx) \). This map is of course the monotone rearrangement if it exists. This is again achieved via the control variable approach by defining \( R = F_{X|Z}(X) \) and conditioning on $R$. This implies that the horizontal map does not affect the distribution of \(V \) and hence the function \( g(\cdot, V) \) we want to identify. If we keep alternating between these two maps for a given starting value \( x_0 \), this sequence 
\(\lim_{m\to+\infty} T^m(x_0)\), that is, \( x_0, T(x_0), T(T(x_0)), T(T(T(x_0))) \dots\) either converges or diverges. 

To make it converge, one now assumes the existence of a fixed point $x^*$, for instance by requiring that the conditional CDFs $F_{X|Z=z}$ and $F_{X|Z=z'}$ are continuous and that they intersect at a point, as depicted in Figure \ref{fig:torgovitsky} \citep{torgovitsky2015identification}. In this case, the dynamics will converge to \( x^* \) where \( F_{X|Z=z} \) and \( F_{X|Z=z'} \) intersect. For points \( x \leq x^* \) we need to iterate \( z \mapsto z' \) and for points \( x \geq x^* \) we need to iterate \( z' \mapsto z \). This allows us to identify the function \( g(x, V) \) by comparing different points \( x \) in this iterative approach to the point \( x^* \), because the ``vertical'' and ``horizontal'' maps do not change the distribution of \( V \) and hence keep \( g(\cdot, V) \) fixed. Now in combination with the standard assumption that $g$ is the monotone rearrangement in $V$ and some other regularity assumptions, this argument allows one to identify the true causal mechanism $g$, but now in the setting $X\not\perp U$. 

\revision{To see why this yields identification more concretely, note that at the fixed point $x^*$ the two conditional CDFs coincide, $F_{X|Z=z}(x^*)=F_{X|Z=z'}(x^*)$, so the control variable $R=F_{X|Z}(X)$ takes the same value regardless of $Z$. This means that $X$ is effectively exogenous at $x^*$, and the mechanism $g(x^*,V)$ is identified directly from the observed conditional distribution $P_{Y|X=x^*}$ (under a normalization). Since each vertical and horizontal step preserves the distribution of $V$, it follows that $g(x,V)$ is identified at every $x$ reached by the iteration. The convergence of the iteration to $x^*$ then propagates identification from $x^*$ to all other values of $x$.}

The mathematically interesting part of this argument is the fixed-point iteration. In particular, it is an interesting question if this argument can be extended to a multivariate first stage and what the dynamics look like in this case. This has been done in \cite{gunsilius2023condition} in a special case of optimal transport maps, but there are many other open questions. The question then becomes: is it possible to generalize the fixed-point idea from the univariate setting when considering dynamics of Brenier maps $\nabla\varphi_z$? The first key is to generalize the ``horizontal'' map in Figure \ref{fig:torgovitsky} to the multivariate setting (the ``vertical'' \revision{map is} the same as in the univariate setting). 

The main requirement pointed out in \cite{gunsilius2023condition} is that $h(z,\cdot)$ be a measure-preserving isomorphism for every $z$, i.e., a map that preserves the measure whose inverse also preserves the measure in the sense that $P_{X|Z=z}(A) = P_W(h^{-1}(z,A))$ for any Borel set $A$, and where $h^{-1}(z,A)$ denotes the preimage of $A$. This is because for any measure-preserving isomorphism, it holds
\begin{equation}\label{equalitiesimportant}
P_{V|X=x,Z=z}=P_{V|W=h^{-1}(z,x),Z=z}=P_{V|W=h^{-1}(z,x)},
\end{equation}
where the second equality follows since $Z$ is an instrument. The first equality holds by defining the map $\phi: (X,Z) \mapsto (h^{-1}(Z,X),Z)$, which is a measure-preserving isomorphism. 
The same reasoning holds for the map 
\[(v,x,z)\mapsto (v,h^{-1}(z,x),z).\] 
Thus $P_{V|X,Z}(A_v)=P_{V|W,Z}(A_v).$
This shows that $V$ does not change when applying the map $\phi$, which makes the latter the natural extension of the ``horizontal map'' in the univariate setting.
This argument shows that a measure preserving isomorphism is the required generalization of the ``horizontal map''. 

As a result, this shows that if we make some standard regularity assumptions, such as continuity of $h(z,\cdot)$, \emph{then $W$ must be of the same dimension as $X$.} This necessary condition has been well-understood and has also been found in \cite{hoderlein2017corrigendum}. In particular, this implies that identification in general instrumental variable models of the form \eqref{eq:struct_IV} puts a strong limit on the dimension on $W$ in comparison to $X$---in short, the dimension of the unobservable in the first stage needs to be restricted, while the second stage can be much more general.\\

 \noindent\emph{Multivariate extension.}
  Generalizing the ``horizontal'' map from the univariate case to the multivariate setting is challenging. A multivariate construction based on the Brenier map transporting $P_{X|Z=z}$ back to $P_{X|Z=z'}$ is developed in
  \cite{gunsilius2023condition}. The relevant fixed object need not be a single point: it is the fixed \emph{set} of this map. Its geometry and the behavior of the associated orbits are determined by the resulting Brenier-map dynamics.

  The required regularity and nondegeneracy conditions are summarized in the following assumption.

  \begin{assumption}\label{ass:rankmap}
   Let $R_z:\mathcal{X}\to\mathcal{X}$ denote the Brenier map transporting
  \[
  P_z:=P_{X|Z=z}
  \quad\text{to}\quad
  P_{z'}:=P_{X|Z=z'},
  \]
  and define its fixed set by
  \[
  \mathcal{I}_R:=\{x\in\mathcal{X}:R_z(x)=x\}.
  \]
  The conditional laws $P_z$ and $P_{z'}$ have a common compact, uniformly convex support $\mathcal{X}\subset\mathbb{R}^k$ with $C^2$ boundary. They are absolutely continuous, with densities
  \[
  f_z,f_{z'}\in C^{0,\alpha}(\overline{\mathcal{X}})
  \]
  for some $\alpha\in(0,1)$, and there exist constants
  $0<\lambda<\Lambda<\infty$ such that
  \[
  \lambda\leq f_z(x),f_{z'}(x)\leq\Lambda
  \quad
  \text{for every }x\in\overline{\mathcal{X}}.
  \]
  After fixing a $C^1$ extension $\widetilde R_z$ to an open neighborhood of
  $\overline{\mathcal{X}}$, there exist an open neighborhood
  $U\supset\mathcal{I}_R$ and an integer $r\in\{1,\ldots,k\}$ such that
  \[
  \operatorname{rank}\!\left(D\widetilde R_z(x)-I_k\right)=r
  \quad
  \text{for every }x\in U.
  \]
  \end{assumption}

The support and density requirements in Assumption \ref{ass:rankmap} are standard sufficient conditions from Caffarelli's global regularity theory for the Monge--Amp\`ere equation \citep{caffarelli1990interior,caffarelli1992regularity,caffarelli1996boundary}.They imply that the Brenier potential $\psi$ admits a $C^{2,\alpha}$ representative up to the boundary and hence that the rank map $R_z=\nabla\psi$ admits a $C^{1,\alpha}$ representative  \citep[see, e.g.,][]{villani2021topics}. Compactness and convexity also allow one to apply Brouwer's fixed-point theorem and provide the compactness needed for the orbit-accumulation argument. These conditions do not impose unimodality and therefore admit smooth multimodal and non-quasi-concave distributions. The constant-rank requirement does not follow from Caffarelli's theory; it is a separate nondegeneracy condition ensuring that the rank fixed set is locally lower-dimensional and hence has Lebesgue measure zero. These conditions are sufficient rather than necessary, and extending the argument beyond common compact supports and the stated regularity conditions seems fruitful.

  To connect this transport map to the structural first stage, maintain the assumption $Z\perp(V,W)$, assume that $P_W$ is absolutely continuous, and assume that $h(\tilde z,\cdot)$ is cyclically monotone for each $\tilde z\in\{z,z'\}$. Impose the first-stage normalization
  \[
  h(z',w)=w,
  \]
  and write $h_z(w):=h(z,w)$. Since $X=h(Z,W)$, this normalization gives
  \[
  P_{z'}=P_W
  \qquad\text{and}\qquad
  (h_z)_{\#}P_{z'}=P_z.
  \]
  Brenier uniqueness therefore implies that $h_z$ coincides
  $P_{z'}$-almost everywhere with the unique Brenier map
  $T:\mathcal{X}\to\mathcal{X}$ satisfying
  \[
  T_{\#}P_{z'}=P_z.
  \]
  Its $P_z$-almost-everywhere inverse coincides with the map $R_z$ introduced
  above:
  \[
  R_z=T^{-1}=h_z^{-1}
  \qquad P_z\text{-almost everywhere}.
  \]
  Thus $R_z$ sends an observed value of $X$ in regime $z$ back to its
  first-stage rank $W$ in the normalized regime $z'$. It therefore provides
  the multivariate analogue of the ``horizontal'' map.

  There is also a useful connection between this recursion and gradient dynamics. Write $R_z=\nabla\psi$ for its Brenier potential and define
  \[
  L(x):=\psi(x)-\frac{1}{2}\|x\|^2.
  \]
  The rank-map recursion can then be written as
  \[
  x_{n+1}=x_n+\nabla L(x_n).
  \]
  It is therefore unit-step gradient ascent, or equivalently the explicit-Euler discretization with step size one of the continuous gradient-ascent flow
  \[
  \dot{x}=\nabla L(x).
  \]
  Moreover,
  \[
  \mathcal{I}_R
  =
  \{x\in\mathcal{X}:R_z(x)=x\}
  =
  \{x\in\mathcal{X}:\nabla L(x)=0\},
  \]
  and convexity of $\psi$ gives the Lyapunov inequality
  \[
  L(x_{n+1})-L(x_n)
  \geq
  \frac{1}{2}\|x_{n+1}-x_n\|^2.
  \]

  \begin{claim}\label{claim:rank_map_dynamics}
  Under Assumption \ref{ass:rankmap}, the inverse Brenier rank map $R_z$ admits a $C^{1,\alpha}$ representative and its fixed set $\mathcal{I}_R$ is nonempty. Moreover, $\mathcal{I}_R$ is locally contained in the intersection of $\mathcal{X}$
  with a $C^1$ embedded submanifold of $\mathbb{R}^k$ of dimension $k-r<k$, and hence has Lebesgue measure zero. For every $x_0\in\mathcal{X}$, every accumulation point of the orbit
  \[
  \{R_z^n(x_0)\}_{n\geq0}
  \]
  belongs to $\mathcal{I}_R$, and
  \[
  \operatorname{dist}\!\left(R_z^n(x_0),\mathcal{I}_R\right)
  \longrightarrow 0.
  \]
 This establishes convergence to the fixed set in distance. The preceding argument alone does not establish convergence of the full orbit to a single point.
  \end{claim}

  The rank-map transition preserves the first-stage rank $W$. Since $Z\perp(V,W)$, for $P_z$-almost every $x$,
  \[
  P_{V|X=x,Z=z}
  =
  P_{V|X=R_z(x),Z=z'}.
  \]
  This equality allows the identification argument to be propagated along rank-map orbits. In particular, let $C\subset\mathcal{I}_R$ be a closed connected piece of the fixed set and regard $C$ as one quotient anchor. Call $C$ normalized only when the second-stage normalization fixes the observational-equivalence ambiguity at every point of $C$ that can arise as an orbit accumulation point, unless the maintained assumptions make that ambiguity constant on $C$. Define its accumulation basin by
  define its accumulation basin by
  \begin{multline*}
  \mathcal{B}(C)
  :=
  \left\{
  x\in\mathcal{X}\setminus\mathcal{I}_R:
  \text{some accumulation}\right.\\
  \left. \text{point of }
  \{R_z^n(x)\}_{n\geq0}
  \text{ belongs to }C
  \right\}.
  \end{multline*}
Under the maintained conditions that $g(x,\cdot)$ is uniquely identified in the exogenous-$X$ problem, that $g(\cdot,v)$ is uniformly continuous in probability in $x$, that $P_{V|X=x}$ is absolutely continuous, and that the support of $V$ is convex and independent of $X$ and $Z$, the mechanism $g(x,v)$ is identified for $P_X$-almost every $x\in\mathcal{B}(C)$ and $P_V$-almost every $v$. If the accumulation basins of countably many normalized fixed-set pieces cover $\mathcal{X}\setminus\mathcal{I}_R$ up to a $P_X$-null set, then $g(x,v)$ is identified for $P_X$-almost every $x$ and $P_V$-almost every $v$ \citep{gunsilius2023condition}.

The interesting part of this identification result lies in the connection between optimal transport, fixed-point methods, and discrete gradient dynamics. The inverse Brenier rank map generates a dynamical system whose fixed set provides the lower-dimensional anchor for identification. Although the assumptions above describe the accumulation behavior of its orbits, finer properties of these dynamics---including full-orbit convergence, stability, and the geometry of the accumulation basins---remain open in more general settings. These questions are not specific to causal inference: fixed-point and gradient iterations arise throughout mathematics, for instance in the analysis of convergence to equilibria \citep[e.g.][]{mascolell1995micro}. Extending the rank-map argument to distributions with different or unbounded supports, less regular densities, or more general transport maps seems fruitful in this regard.

\subsection{Bounds on average outcomes in instrumental variable models}\label{sec:IV_bounds}
The above results and connections between optimal transport and \revision{causal inference were straightforward} in that they were all based on the monotone rearrangement. While the monotone rearrangement will make another appearance later, for now we explore a less straightforward connection between optimal transport and causal inference. It is still situated in the instrumental variable framework, but instead of trying to identify the full causal mechanism $g(X,V)$, we now only care about average structural effects, similar to the control variable approach above. As we have seen, the control variable approach requires strong structural assumptions on the causal mechanism $g$ and the first stage $h$ in order to identify those average structural effects.

In this section, we therefore go the opposite way: instead of making strong structural assumptions on the mechanisms, we ask how tight the bounds on the object of interest will be if we make \emph{no assumptions} on the structure. This mirrors \cite{balakrishnan2023conservative}, but in the setting of instrumental variables, where there is more structure. Also, as will become clear, this setup is not directly related to the classical optimal transport problem we have considered so far, but more general---it is a version of an optimal transport problem on path spaces. I want to cover it in the hope that this connection will bring mathematically novel insights into not just causal inference, but also optimal transport. 

As in the control variables approach, our goal is to identify average structural effects. Before defining them, I want to recall the connection between structural equations, the counterfactual notation \citep{rubin1974estimating}, and ``counterfactual processes'' \citep{balke1994counterfactual,pearl2009causality}: the structural equation $Y=g(X,V)$ is equivalent to the counterfactual process $Y_x(v)$, and analogously for $X=h(Z,W)$ and $X_z(w)$. This allows us to obtain the bounds by optimizing over the \emph{counterfactual path space}, which is a convenient representation in many settings. 

The original idea for this approach was introduced in \cite{balke1994counterfactual,balke1997bounds} in the case where all observable variables $Y,X,Z$ are binary (see also the contributions in \cite{manski1990nonparametric}). These results were generalized to continuous $Y$ and binary $X$, and $Z$ in \cite{kitagawa2021identification} and to discrete $Y$, $X$, and $Z$ in \cite{russell2021sharp}. These partial identification results are part of a vast literature in econometrics and causal inference, see \cite[e.g.][]{manski1999identification,manski2003partial}. In the following we work in the general framework of $Y$,$X$, and $Z$ having potentially continuous laws \citep{gunsilius2019path}. 

Our goal here is to obtain bounds on average structural effects of the form
\[\Lambda(X\to Y) = \mathbb{E}[\Lambda(g(x,V))]\] as in the case of control variables above. The seminal idea of \cite{balke1994counterfactual,balke1997bounds} is to consider the counterfactual distributions $P_{Y(x)}$ and $P_{X(z)}$ as the laws of corresponding counterfactual processes $Y_x(v)$ and $X_z(w)$ of the first and second stage of the IV model \eqref{eq:struct_IV}. Each element $v\in\mathcal{V}$ indexes one path $Y_x(v)$ and each $w\in\mathcal{W}$ indexes one path $X_z(w)$ of the processes, respectively. Mathematically, this is possible if the spaces of paths are small enough, so that they can be put in a one-to-one relation to the unit interval, for instance the space of all continuous paths \citep[e.g.][Theorem 9.2.2]{bogachev2007measure}.
The average structural effect can then be written as
\[\Lambda(X\to Y) = \mathbb{E}[\Lambda(Y_x)],\] where the expectation is taken with respect to the joint distribution $(P_V, P_W)$ over the \revision{latent variable $U\equiv (V,W)$}. In this setting, it is often easier to consider the unobservable $U$ instead of $(V,W)$, but both representations are equivalent. 

The idea for obtaining bounds on $\Lambda(X\to Y)$ is the following. As mentioned in the introduction, because of the endogeneity problem $X\not\perp U$, the observable distribution $P_{Y|X=x}$ is \revision{\emph{not}} the correct counterfactual law $P_{Y(x)}$. However, \emph{if $Z$ is a valid instrument}, then the observable conditional distribution $P_{Y,X|Z=z}$ does provide correct information on the causal system. So the key is to back out bounds on $P_{Y(x)}$, which is required for $\Lambda(X\to Y)$, from the observed distribution $P_{Y,X|Z=z}$. One can achieve this via a linear program, which is also the connection to optimal transport---and the reason I include it in this review. 

The linear program to obtain bounds on $\Lambda(X\to Y)$ is
\begin{equation}\label{eq:bound_equation}
\begin{aligned}
&\underset{\substack{P_U\in\mathcal{P}^*}}{\min/\max}\quad  \mathbb{E}_{P_U}[\Lambda(Y_{x})]\\
\text{s.t.}\quad  &F_{Y,X|Z=z}(y,x) = P_U(Y_{X_z}\leq y,X_z\leq x),
\end{aligned}
\end{equation}
where $\mathcal{P}^*$ is a set of probability measures on the joint path space of the processes $Y_x(u)$ and $X_z(u)$, and $Y_{X_z(u)}(u)$ is the process based on a realization of the process $X_z(u)$ for given $u$. In words, the optimization problem tries to find a corresponding measure $P_U$ on path space which maximizes (for an upper bound) or minimizes (for a lower bound) the average structural effect $\Lambda(X\to Y)$ under the restriction that $P_U$ induces counterfactual processes $Y_x$ and $X_z$ whose induced joint law
\[F_{[Y,X]_z}= P_U(Y_{X_z}\leq y,X_z\leq x)\] coincides with the \emph{observable} marginal  $F_{Y,X|Z=z}$. 

The problem \eqref{eq:bound_equation} is a generalized optimal transport problem on path spaces, similar to the \revision{\emph{stochastic optimal transport problem} \citep{mikami2021stochastic}}. To see the connection, suppose that $Z$ can only take two values, $z$ and $z'$. In this case, \eqref{eq:bound_equation} requires to find a joint measure for $([Y,X]_z,[Y,X]_{z'})$ such that the marginals coincide with the observable marginal measures $P_{Y,X|Z=z}$ and $P_{Y,X|Z=z'}$. For more general $Z$, it becomes the problem on path measures. It is very closely related to---but more general than---problems from statistical physics, in particular large particle dynamics \citep[e.g.][]{dawsont1987large, follmer1988random, cattiaux1995large}, \revision{the HWI inequality \citep{gentil2020entropic}, and Csiz\'ar projections \citep{csiszar1975divergence}.} The considered problem is more general, because the objective expression only depends on $Y$ and $X$, while the marginals are for $Y,X,$ and $Z$. \revision{These connections seem to be extremely tantalizing and fruitful for future research at the intersection of optimal control, causal inference, and statistical physics, with optimal transport being the bridge.}

So far, only a rather inefficient sampling method has been proposed to solve this problem in practice \citep{gunsilius2019path}, for a very general set of processes $Y_x, X_z$. \cite{kilbertus2020class} provide a clever efficient estimator by simplifying the problem: instead of requiring a replication of the entire marginal distribution in the constraint, they essentially replicate moments of the data, turning the problem into a generalized method of moments problem. While this provides a robust and efficient estimator, it does not solve the original causal inference problem. \revision{A promising direction is entropic regularization: by adding a relative entropy penalty, the infinite-dimensional linear program can be transformed into a multi-marginal Schr\"odinger bridge problem, which admits tractable approximations with convergence guarantees \citep{gunsilius2025schrodinger}. This approach is the first that makes the solution to \eqref{eq:bound_equation} feasible, statistically and computationally.}

Furthermore, this setup seems like a useful generalization of classical optimal transport problems to measures on path spaces, whose analysis is likely to provide new insights and connections in the mathematical theory of optimal transportation. In the finite setting, i.e., where $X$ and $Z$ are supported on finitely many points, the problem reduces to a simple finite-dimensional linear program, which has been analyzed recently in statistics \citep{klatt2022limit} and econometrics \citep{fang2023inference}.

\subsection{Distributionally robust IV methods}
Before moving on to other general approaches for identification of causal effects in settings with observable data, I want to mention another area where optimal transportation arguments are paramount, and which has recently gained attention in the literature on causal inference: distributionally robust optimization (DRO) \citep{blanchet2019robust, blanchet2019quantifying, gao2023distributionally}. 

Here, the object of interest is $\mathbb{E}[l(D,\beta)]$ for some general loss function $l(D,\beta)$, where the expectation is taken with respect to the law of the data $D$ in an i.i.d.~setting. $\beta$ is the parameter one wants to optimize over. The idea of DRO is to make the estimator robust to different heterogeneous environments, as encoded by the law from the data $D$ is drawn. This heterogeneity is captured by defining a region of possible distributions the problem can take around the data-distribution. The utility of DRO comes from the fact that when one chooses a Wasserstein ball for the regions of possible distributions, the primal DRO problem
\[\min_\beta \max_{Q\in B_\rho(P)} \mathbb{E}_Q[l(D,\beta)],\] where 
\(B_\rho(P) = \{Q : W_2(Q,P)\leq \rho\}\) is a ball in the Wasserstein space of radius $\rho$, admits a dual problem that often takes the form of a standard constrained prediction problem \citep{blanchet2019quantifying, gao2023distributionally}.

Recently, a very interesting contribution \citep{qu2024distributionally} considered the DRO approach for \emph{linear} instrumental variable models, that is where 
\begin{align*}
    g(X,V) &= X^\top\beta_0+ V\quad\text{and}\\
    h(Z,W) &= Z^\top\gamma + W. 
\end{align*}
The optimization problem in this case is 
\[\min_\beta \max_{Q\in B_\rho(\tilde{P}_n)} \mathbb{E}_Q[(Y-X^\top\beta_0)^2],\] where $\tilde{P}_n$ is not the empirical measure for the observations $\{(Y_i,X_i)\}_{i=1}^n$, which are considered i.i.d.~draws from $P$, but the empirical measure of the observations \emph{projected onto the instrument} $Z$, that is,
\[\{(\tilde{Y}_i,\tilde{X}_i)\}_{i=1}^n = \{(\Pi_Z Y_i,\Pi_Z X_i)\}_{i=1}^n,\] where \(\Pi_Z = (Z^\top Z)^{-1} Z^\top\) is the projection onto the columns space of $Z$. 
The ingenuity of this approach lies in the dual problem, which takes a regularized regression form:
\[\min_{\beta} \sqrt{\frac{1}{n} \|\Pi_{\mathbf{Z}} \mathbf{Y} - \Pi_{\mathbf{Z}} \mathbf{X} \beta\|^2} + \sqrt{\rho (\|\beta\|^2 + 1)},
\]
where the interesting part is the novel regularizer, dubbed the ``square root ridge'' regularizer \citep{qu2024distributionally}. In particular, the authors show that the empirical estimator $\hat{\beta}_{n, \rho}$ of $\beta$, is consistent for small enough $\rho>0$, even if $\rho$ does not vanish as the number of data points increases, amid some other nice properties with respect to the weak instrument problem \citep{bound1995problems, stock2002survey} and invalid instruments. Generalizing this idea to the semi-parametric or fully nonparametric setting to understand the corresponding properties and how it compares to the other approaches in instrumental variable models we have considered so far could be useful.

\revision{Throughout this section, all identification arguments have been presented unconditionally. In practice, instruments are often only valid conditional on observed covariates $L$, i.e., $Z \perp (\varepsilon, U) \mid L$. In this case, all of the above arguments go through within strata defined by $L$, with the OT problems solved conditionally. More efficient strategies that exploit the transport structure directly when $L$ is high-dimensional are an interesting direction for future work.}

\section{Using time variation for identification: difference-in-differences and  parallel trends}
\revision{In this section, $t$ indexes time periods, $g\in\{0,1\}$ indexes treatment and control groups, and $Y_{g,t}$ denotes the outcome for group $g$ at time $t$.}

While instruments are a ``silver bullet'' for the general endogeneity problem introduced above, finding a valid and relevant instrument is the challenge in practice. This is amplified by the fact that instrument validity is difficult to test, which turns out to be an impossibility when the instrument $Z$ is distributed on a continuum \citep{gunsilius2021nontestability, pearl1995testability}. 

Moreover, in many settings, other information is available on the problem, often a time domain. The simplest setting is the one with two time periods---a pre-treatment period $t=0$ and a post-treatment period $t=1$---and two groups of interest---usually a treated group $g=1$ and an untreated control group $g=0$. This is the setting in the minimum wage analysis by Card and Krueger \citep{card1994minimum} \revision{mentioned in the introduction} for instance. 

It is natural to exploit the given structure. The idea is to make the \emph{parallel trends assumption}. This \revision{assumption states} that the change over time of the outcome in the control group, that is, the group that did not receive the treatment between $t=0$ and $t=1$, captures all the unobservable influences that cause the outcome to change over time \emph{except for the influence of the actual treatment}. Under this assumption, one can then extract the effect of receiving treatment between $t=0$ and $t=1$ by comparing the change in the outcome over time for the treatment group with the change in the outcome over time for the control group: a difference-in-differences.  

This method of difference-in-difference has become one of the main pillars for reduced-form causal inference in applied research, extended to many time periods and several different techniques. \revision{The standard DiD models are parametric linear,} that is, $g(X,V) = X^\top\beta +V$ and $h(Z,W) = Z^\top\gamma + W$. This allows for simple linear regression methods to identify \emph{average treatment effects}, as in \cite{card1994minimum}. For a recent overview of this applied literature, consider \cite{roth2023s}. A formal treatment of the semiparametric case, still focusing on average effects, can be found in \cite{abadie2005semiparametric}. The methods focusing on average effects are fundamental, but can miss important details as laid out in the minimum wage example in the introduction.

\subsection{Nonlinear difference-in-differences and the changes-in-changes estimator}
A nonparametric method to identify general heterogeneous effects for univariate outcomes was introduced in \cite{athey2006identification}. \revision{It is based on the monotone rearrangement}, and considers the entire distribution of outcomes and unobservables. The abstract relations between all measures are depicted in Figure \ref{fig:DID}, adapted from \cite{torous2024optimal}.
\begin{figure}[h!t]
\centering
\begin{tikzpicture}[scale=0.7]
\draw node[left] at (-0.4,3) {$\textbf{Control}$};
\draw node[right] at (3,3) {$\textbf{Treatment}$};

\draw node[left] at (-3.5,2) {$\textbf{t=1}$};
\draw node[left] at (-3.5,-1) {$\textbf{t=0}$};

\draw node[left] at (-1.7,2.2) {$\boxed{P_{Y_{C,1}}}$};
\draw node at (-0.1,0.5) {$\nu$};
\draw node at (-0.1,-1) {$\nu$};
\draw node[left] at (-1.7,-1.1) {$\boxed{P_{Y_{C,0}}}$};

\draw[->] (-0.25,0.67) to (-1.7,1.9);
\draw[->, dashed] (-0.1,-0.75) to (-0.1,0.25);
\draw[->] (-0.3,-1) to (-1.7,-1);
\draw[->,thick] (-2,-0.75) to (-2,1.75);

\draw node[left] at (-2,0.5) {$\mathrm{d} $};
\draw node[above] at (-0.9,-1) {$h_0$};
\draw node at (-1,0.95) {$h_1$};

\draw node[left] at (2.2,-1.1) {$\boxed{P_{Y_{T,0}}}$};
\draw node[left] at (2.2,2) {$P_{Y_{T,1}^\dagger}$};
\draw node at (4,-1) {$\nu^\star$};
\draw node at (4,0.5) {$\nu^\star$};
\draw node[right] at (5.8,2.1) {$\boxed{P_{Y_{T,1}}}$};

\draw[->] (3.7,-1) to (2.2,-1);
\draw[->, dashed] (3.9,-0.75) to (3.9,0.2);
\draw[->] (3.75,0.6) to (2.2,1.8);
\draw[->] (4.25,0.6) to (5.8,1.8);
\draw[->,thick] (2,-0.75) to (2,1.65);
\draw[->,thick] (5.8,2) to (2.2,2);

\draw node[left] at (2,0.5) {$ \mathrm{d} $};
\draw node[above] at (3,-1) {$h_0$};
\draw node at (3,0.9) {$h_1$};
\draw node at (5.1,0.9) {$h_1^\star$};
\draw node at (4,1.75) {$ \mathrm{T}$};
\end{tikzpicture}
\caption{Illustration of various maps in the ``nonlinear difference-in-differences'' setup. An arrow indicates a pushforward map between two measures; for example $P_{Y_{C,1}}=\mathrm{d}_{\#} P_{Y_{C,0}}$. The maps $h_j$ are the ``production functions'' linking the unobservable measures $\nu$ and $\nu^*$ to the potential outcomes. A dashed arrow indicates a map from a measure to itself. $P_{Y_{T,1}^\dagger}$ is the counterfactual outcome measure of the treated units had they not received treatment. {$\mathrm{d}$} is the natural trend map and {$\mathrm{T}$} is the map from an observed outcome to its counterfactual. The {observable data} is drawn from the four boxed measures.}
\label{fig:DID}
\end{figure}
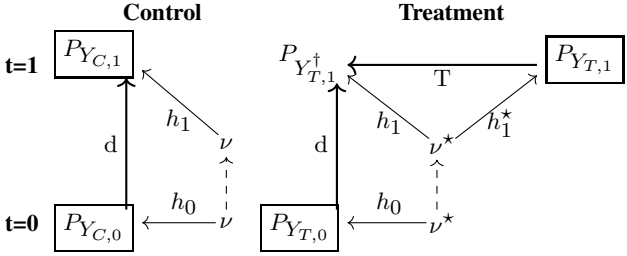

Each unit $i$ in an empirical setting is sampled from a larger population. Adapting the notation of \cite{athey2006identification}, a random variable $G_i$ denotes a unit’s treatment: $C$ for no treatment, $T$ for treatment. Let the random vectors $Y_{i;C,0}$ and $Y_{i;C,1}$ \revision{denote} unit $i$’s observable potential outcomes in the control group in the pre- and post-intervention periods, respectively; $Y_{i;T,0}$ and $Y_{i;T,1}$ are the unit’s observable potential outcomes when $G_i = T$. Each unit has indicator random variables $T_{i,0}$ and $T_{i,1}$ denoting whether an outcome was observed in each study period. 

The main assumptions needed for the model are as follows.
\begin{enumerate}
    \item Each potential outcome in the absence of treatment is generated by a deterministic production function $h(t,\cdot)$. That is, $Y_{i;C,0} = h(0,U_i)$, $Y_{i;T,0} = h(0,U_i^*)$, and $Y_{i;C,1} = h(1,U_i)$. 
    \item Moreover, the laws $\nu$ and $\nu^*$ of the random variables $U_i$ and $U_i^*$ describing the unobservable characteristics of the individual do not change over time \emph{within a treatment group}. In Figure \ref{fig:DID} this is captured by the fact that $\nu$ and $\nu^*$---which can be arbitrarily different---stay the same over time. 
\end{enumerate}

The basic idea is as in the linear case: the ``parallel trends'' assumption implies that the natural trend of the control group (the map ``$\mathrm{d}$'' in Figure \ref{fig:DID}) is the change in the outcome distribution of the treatment group had it not received treatment. By definition, for the control group, in both time periods, one observes the counterfactual outcome of no treatment (the two boxed measures $P_{Y_{C,0}}$ and \revision{$P_{Y_{C,1}}$} in the control group). For the treated group, one observes the counterfactual outcome of no treatment at $t=0$ and the potential outcome of being treated at $t=1$ (the two boxed probability measures $P_{Y_{T,0}}$ and $P_{Y_{T,1}}$ in the treatment group in Figure \ref{fig:DID}).

To isolate the effect of receiving the treatment between $t=0$ and $t=1$, we want to ``net the change'' over time in the observed outcomes in the treatment group by the change in the control group. 
In order to identify the counterfactual outcome of the treatment group had it not received treatment, we want to transplant the ``natural trend'' $\mathrm{d}$---that is the change in the outcome in the case where no treatment is administered---to the outcome $P_{Y_{T,0}}$ of the treatment group before treatment. 

There are \revision{infinitely many} ways to (i) define the natural trend and (ii) translate it to $P_{Y_{T,0}}$. 
A \revision{canonical} way in the univariate setting is to work with the order structure, which leads directly into the use of the monotone rearrangement. The additional assumptions needed for this are as follows \citep{athey2006identification}. 
\begin{enumerate}
\setcounter{enumi}{2}
\item The production functions $h(t,U)$ \revision{are assumed monotone in $U$, so that it coincides with the monotone rearrangement from the distribution of the unobservable to that of the corresponding potential outcome.}
\item \revision{The support of $U^*$ must be contained in the support of $U$, so that the natural trend $d = h_1 \circ h_0^{-1}$, which is defined on the image of $\mathrm{supp}(\nu)$, can be applied to the entire treatment distribution $P_{Y_{T,0}}$.}
\end{enumerate}

Under these assumptions---and if $U$ is either continuously or discretely distributed---the counterfactual CDF $F_{Y^\dagger_{T,1}}$ is identified as \citep[Theorem 3.1]{athey2006identification}
\[F_{Y^\dagger_{T,1}}(y) = F_{Y_{T,0}}\left(F^{-1}_{Y_{C,0}}\left(F_{Y_{C,1}}(y)\right)\right),\] and the difference between the observed $F_{Y_{T,1}(y)}$ and the induced $F_{Y^\dagger_{T,1}}(y)$ is the \emph{changes-in-changes estimator}.
Note that the monotone rearrangement in this identification result goes in \emph{the opposite direction} \revision{from what one might expect}, that is, from the post-treatment period to the pre-treatment period in the control group. The intuition is the same, however: we want to translate the natural trend from the control onto the treatment group to induce the counterfactual of what would have happened to the treatment group without treatment. 

The statistical estimator based on this argument was used in \cite{ropponen2011reconciling} to analyze the minimum wage debate as outlined in the introduction. By being able to estimate the entire counterfactual CDF and not just an average, \cite{ropponen2011reconciling} showed that there is a positive employment effect for smaller fast-food restaurants and a negative employment effect for larger ones. \revision{However, the univariate analysis cannot capture the dependence structure across outcome dimensions, which motivates the extension to multivariate outcomes.}

Finally, I want to point out that in the univariate setting, other approaches have been proposed which generalize the ``parallel trends'' assumption to allow for general heterogeneity. \revision{Other approaches have been proposed that replace monotonicity with alternative assumptions on the relationship between pre- and post-treatment distributions.} \cite{callaway2019quantile} directly assume that the copulas between $P_{Y_C,0}$ and $P_{Y_T,0}$ as well as $P_{Y_T,0}$ and $P_{Y_T,1}^\dagger$ are identical. \cite{roth2023parallel} analyze what happens when pointwise differences between the corresponding cumulative distribution functions are equal: $F_1(x) - F_0(x) = F_1^\dagger(x) - F_0^\star(x)$ for all $x \in \mathbb{R}$. \cite{bonhomme2011recovering} restrict the heterogeneity of the model to be additively separable and assume that the pointwise differences between the corresponding logarithms of the characteristic functions are equal.

\subsection{Extensions via (cyclic) comonotonicity}
One would think that an extension of the changes-in-changes idea to multivariate outcomes now follows the straightforward path: replace the monotone rearrangement by Brenier maps. While this is true to an extent, it is not quite so simple, surprisingly. The reason is that in the multivariate setting compositions of Brenier maps are not necessarily Brenier maps; unlike in the univariate setting, where this is true. The multivariate setting has recently been analyzed in detail in \cite{torous2024optimal}\revision{.}

 One key insight is that the monotonicity assumptions on the production functions in the univariate setting of \cite{athey2006identification} allow the latent variable to be entirely abstracted away\revision{.} This follows from the weaker notion of \emph{comonotonicity}. We say $h_0, h_1 : \mathbb{R}^d \to \mathbb{R}^d$ are \emph{comonotone} \citep{ekeland2012comonotonic} if
\[
\langle h_0(x) - h_0(y), h_1(x) - h_1(y) \rangle \geq 0
\]
for all $x, y \in \mathbb{R}^d$. Note that with an identity production function, $h(t,U) = U$, comonotonicity reduces to classical monotonicity. In the univariate case comonotonicity implies locally that the derivatives $h_0' \cdot h_1' \geq 0$ have the same sign; in addition to further global constraints. As a concrete example, if $h_0$ is a polynomial and $h_1$ its pointwise scaling by some $\gamma > 0$, then this pair of functions is comonotone because they have the same sign between all zeros and hence the same signed difference between any pair of points. This example emphasizes that $h_0$ and $h_1$ need not be individually monotone themselves for them to be comonotone. \revision{Cyclical comonotonicity is thus a weaker condition than requiring each production function to be individually monotone; however, it suffices for identification in the multivariate setting.}

We have now seen that the monotone production functions assumption in \cite{athey2006identification} in the univariate setting implies that the natural trend $d$ is cyclically monotone. Furthermore, the production functions are comonotone. To extend the changes-in-changes estimator to higher dimensions, \cite{torous2024optimal} exploit the idea of comonotonicity. They show that the assumption of \emph{cyclically comonotone production functions} has the desired properties to generalize the changes-in-changes estimator.
\begin{definition}[Cyclic comonotonicity, \citep{torous2024optimal}]
Two production functions $h_0$ and $h_1$ are cyclically comonotone if for any positive integer $m$ and any cycle \[u_1, \dots, u_m, u_{m+1} = u_1\] in their common domain, it holds
\[
\sum_{i=1}^{m} \langle h_0(u_i), h_1(u_i) - h_1(u_{i+1}) \rangle \geq 0.
\]
\end{definition}

Just as cyclical monotonicity collapses to monotonicity when $m = 2$, cyclical comonotonicity collapses to comonotonicity in that case. Whenever $h_0$ has an inverse $h_0^{-1}$, cyclical comonotonicity implies that the natural trend $d = h_1 \circ h_0^{-1}$ is also cyclically monotone \citep[Theorem 2]{torous2024optimal}. By the uniqueness of Brenier maps \citep[e.g.][Theorem 2.12]{villani2008optimal}, $d$ is then the unique Brenier map such that $P_{Y_C,1} = d\#P_{Y_C,0}$. For the identification result, one needs some structural assumptions on the supports, mirroring the univariate setting: the observable measures $P_{Y_C,t}$, $P_{Y_T,t}$, $t = 0,1$, and the counterfactual measure $P_{Y_T,1}^\dagger$ are supported on proper convex subsets $K_t$, $K_t^\star$, and $K_1^\dagger$ of $\mathbb{R}^d$ and are absolutely continuous with respect to Lebesgue measure. Moreover, $K_0^\star \subset K_0$, which is required to transport the map entirely to the new domain.

\begin{theorem}[Multivariate extension of the changes\--in\--changes estimator, \citep{torous2024optimal}]
Consider the causal model depicted in Figure \ref{fig:DID}. Let the above regularity assumptions hold. Moreover, assume that the production function $h_0$ has a well-defined inverse and that $h_0$ and $h_1$ are cyclically comonotone in the sense of (6). Then there exists a unique map $d : K_0^\star \to K_1^\dagger$. It is the Brenier map from $P_{Y_C,0}$ to $P_{Y_C,1}$. The counterfactual distribution $P_{Y_T,1}^\dagger$ of the treated unit had it not received treatment is then identified via
\[
P_{Y_T,1}^\dagger = d\#P_{Y_T,0}.
\]
\end{theorem}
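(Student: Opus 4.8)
The plan is to identify the \emph{natural trend} $d := h_1 \circ h_0^{-1}$ --- which is defined through the \emph{unobservable} production functions --- with the Brenier map between the two \emph{observable} control distributions $P_{Y_C,0}$ and $P_{Y_C,1}$. This single identification delivers everything at once: existence of $d$ on $K_0^\star$, its uniqueness (the trend does not depend on which admissible cyclically comonotone pair $(h_0,h_1)$ realizes the model), and the fact that $d$ --- hence the counterfactual distribution --- is recoverable from data. The displayed formula $P_{Y_T,1}^\dagger = d_\# P_{Y_T,0}$ then follows essentially by unwinding the structural assumptions.

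First I would record the algebraic content of the model in Figure \ref{fig:DID}: $Y_{C,0} = h_0(U)$, $Y_{C,1} = h_1(U)$, $Y_{T,0} = h_0(U^\star)$, and the no-treatment counterfactual of the treated units is $Y_{T,1}^\dagger = h_1(U^\star)$. Since $h_0$ is invertible, $U = h_0^{-1}(Y_{C,0})$ and $U^\star = h_0^{-1}(Y_{T,0})$, so $Y_{C,1} = d(Y_{C,0})$ and $Y_{T,1}^\dagger = d(Y_{T,0})$ with $d = h_1 \circ h_0^{-1}$; pushing measures forward gives $d_\# P_{Y_C,0} = P_{Y_C,1}$ and $d_\# P_{Y_T,0} = P_{Y_T,1}^\dagger$ for free. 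Hence the final display already holds for this particular $d$, and the remaining task is purely to characterize $d$ intrinsically from observables.

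Next I would show $d$ is cyclically monotone. Given a cycle $x_1, \dots, x_m, x_{m+1} = x_1$ in the relevant domain, put $u_i = h_0^{-1}(x_i)$ (a cycle in the common domain of $h_0,h_1$ since $h_0$ is injective) and apply the cyclic comonotonicity inequality for $(h_0,h_1)$; the substitutions $x_i = h_0(u_i)$ and $d(x_i) = h_1(u_i)$ turn it verbatim into $\sum_{i=1}^m \langle x_i, d(x_i) - d(x_{i+1}) \rangle \ge 0$, i.e.\ cyclic monotonicity of $d$ (this is \citep[Theorem 2]{torous2024optimal}). By Rockafellar's characterization of cyclically monotone sets \citep[Theorem 24.8]{rockafellar1970convex}, the graph of $d$ is contained in the subdifferential of a convex function $\varphi$ on a convex set containing $K_0$, so $d = \nabla\varphi$ Lebesgue-a.e., hence $P_{Y_C,0}$- and $P_{Y_T,0}$-a.e.\ by absolute continuity. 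Then, because $P_{Y_C,0}$ is absolutely continuous and $\nabla\varphi$ pushes it to $P_{Y_C,1}$, Brenier's theorem together with McCann's uniqueness refinement \citep[e.g.][Theorem 2.12]{villani2008optimal} identifies $d$ as \emph{the} (unique, $P_{Y_C,0}$-a.e.) optimal transport map for the quadratic cost from $P_{Y_C,0}$ to $P_{Y_C,1}$ --- an object built only from the observable control measures, which yields both uniqueness and identification. Restricting to $K_0^\star \subset K_0$ (the assumption encoding $\operatorname{supp}\nu^\star \subseteq \operatorname{supp}\nu$) and checking $d(K_0^\star) = h_1(\operatorname{supp}\nu^\star) = \operatorname{supp} P_{Y_T,1}^\dagger = K_1^\dagger$ produces the claimed map $d : K_0^\star \to K_1^\dagger$, and combining with the first step gives $P_{Y_T,1}^\dagger = d_\# P_{Y_T,0}$.

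I expect the main obstacle to be reconciling the \emph{almost everywhere} nature of Brenier--McCann uniqueness on $K_0$ with the need to evaluate $d$ on the smaller set $K_0^\star$ against $P_{Y_T,0}$: one must argue that $P_{Y_C,0}$ charges every relative neighborhood of $K_0$ (so the transport map is pinned down throughout $K_0^\star$, not merely off a null set), and that $P_{Y_T,0}$ --- absolutely continuous and supported in $K_0^\star$ by the hypothesis $K_0^\star \subset K_0$ --- does not concentrate on the Lebesgue-null set where $\varphi$ fails to be differentiable, so that $d(Y_{T,0}) = \nabla\varphi(Y_{T,0})$ holds $P_{Y_T,0}$-a.e. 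A secondary, more routine point that nonetheless genuinely uses the stated regularity (continuity of $h_0$, convexity of the supports) is verifying that $h_0^{-1}$ is measurable and that $h_0$ carries its domain onto a dense subset of $K_0$, so the cyclic-monotonicity computation is legitimate on a set of full $P_{Y_C,0}$-measure and $d$ extends by continuity where needed; this is where the hypotheses enter, rather than soft optimal-transport arguments.
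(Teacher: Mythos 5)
Your proposal is correct and follows essentially the same route as the paper's argument (itself following \citep{torous2024optimal}): invertibility of $h_0$ plus cyclic comonotonicity of $(h_0,h_1)$ makes the trend $d=h_1\circ h_0^{-1}$ cyclically monotone, Rockafellar/Brenier--McCann then pins $d$ down as the unique quadratic-cost optimal map between the observable control measures, and the support condition $K_0^\star\subset K_0$ with absolute continuity transfers it to the treated group via $P_{Y_T,1}^\dagger=d_\#P_{Y_T,0}$. Your additional care about the $P_{Y_T,0}$-a.e.\ validity of $d=\nabla\varphi$ on $K_0^\star$ is a sensible tightening of the same argument rather than a departure from it.
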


\revision{As in the univariate case, when the measures are not absolutely continuous with respect to Lebesgue measure, the optimal transport map is no longer unique, and one can only obtain bounds on the counterfactual measures.} Moreover, the optimal transport approach allows us to identify the actual counterfactual random variable, a generalization of \citep{athey2006identification}. 

\begin{corollary}[\citep{torous2024optimal}]
Consider the setting and assumptions from the previous theorem. If the production function $h_1$ has a well-defined inverse and $h_1^\star$ and $h_1$ are cyclically comonotone, then there exists a unique map $T : K_1^\star \to K_1^\dagger$ such that
\[
P_{Y_T,1}^\dagger = T_\#P_{Y_T,1}.
\]
$T$ is the Brenier map from $P_{Y_T,1}$ to $P_{Y_T,1}^\dagger$. $Y_{T,1}^\dagger$ is then identified via
\[
Y_{T,1}^\dagger = T(Y_{T,1}).
\]
\end{corollary}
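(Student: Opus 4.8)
The plan is to exhibit $T$ explicitly as a composition of the production functions and then verify, exactly as in the proof of the preceding theorem, that this composition is cyclically monotone and hence the Brenier map. From the model and Figure \ref{fig:DID}, both the observed post-treatment outcome of the treated group and its no-treatment counterfactual are generated from the \emph{same} latent variable $U^\star\sim\nu^\star$, namely $Y_{T,1} = h_1^\star(U^\star)$ and $Y_{T,1}^\dagger = h_1(U^\star)$. Hence $P_{Y_T,1} = (h_1^\star)_\#\nu^\star$ and $P_{Y_T,1}^\dagger = (h_1)_\#\nu^\star$, and since $h_1$ is invertible one may set
\[
T \;=\; h_1 \circ (h_1^\star)^{-1}, \qquad\text{equivalently } T^{-1} = h_1^\star\circ h_1^{-1}.
\]
By construction $T_\#P_{Y_T,1} = (h_1\circ(h_1^\star)^{-1}\circ h_1^\star)_\#\nu^\star = (h_1)_\#\nu^\star = P_{Y_T,1}^\dagger$ and $T(Y_{T,1}) = h_1(U^\star) = Y_{T,1}^\dagger$, which is the asserted representation; it then remains to show that $T$ is the unique Brenier map and that its domain is $K_1^\star$.

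First I would record the domain statement: since $h_1^\star$ pushes $\nu^\star$ onto $P_{Y_T,1}$ (support $K_1^\star$) and $h_1$ pushes $\nu^\star$ onto $P_{Y_T,1}^\dagger$ (support $K_1^\dagger$), the map $T$ carries $K_1^\star$ onto $K_1^\dagger$; unlike for the trend map $\mathrm{d}$, no support-containment hypothesis is needed here because the composition is not transported to a new latent domain. The key step is then to show that $T^{-1} = h_1^\star\circ h_1^{-1}$ is cyclically monotone, which is the exact analogue of Theorem 2 of \cite{torous2024optimal}: for any cycle $y_1,\dots,y_m,y_{m+1}=y_1$ in $K_1^\dagger$, set $u_i = h_1^{-1}(y_i)$ (legitimate since $h_1$ is invertible); then the cyclic comonotonicity of the pair $(h_1^\star,h_1)$ reads
\[
\sum_{i=1}^m \big\langle h_1^\star(u_i),\, h_1(u_i)-h_1(u_{i+1})\big\rangle \;=\; \sum_{i=1}^m \big\langle T^{-1}(y_i),\, y_i - y_{i+1}\big\rangle \;\ge\; 0,
\]
i.e.\ $T^{-1}$ is cyclically monotone. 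By Rockafellar's theorem \citep[Theorem 24.8]{rockafellar1970convex}, $T^{-1}$ lies in the subdifferential of a convex function; combined with absolute continuity of $P_{Y_T,1}^\dagger$ and convexity of $K_1^\dagger$ this yields $T^{-1} = \nabla\psi$ a.e.\ for a convex $\psi$, so that $T = (\nabla\psi)^{-1} = \nabla\psi^\ast$ is itself a gradient of a convex (Legendre-conjugate) function, hence cyclically monotone. Since $T_\#P_{Y_T,1} = P_{Y_T,1}^\dagger$ and $P_{Y_T,1}$ is absolutely continuous with respect to Lebesgue measure, Brenier's theorem \citep[Theorem 2.12]{brenier1991polar,villani2021topics} identifies $T$ as \emph{the} Brenier map between these two measures and gives its $P_{Y_T,1}$-a.e.\ uniqueness.

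The step I expect to be the main obstacle is the orientation bookkeeping in the cyclic-comonotonicity argument. The definition of cyclic comonotonicity of $(h_0,h_1)$ is not symmetric in its two slots, so one must verify that it is the pair $(h_1^\star,h_1)$ in \emph{this} order that is assumed, and that it is $h_1$ (not $h_1^\star$) whose invertibility makes the substitution $u_i = h_1^{-1}(y_i)$ legitimate; this is precisely the source of the ``or the inverse'' alternative in Claim \ref{claim:metric_projection}, and it is the reason the argument naturally produces cyclic monotonicity of $T^{-1}$, from which one passes to $T$ through the Legendre conjugate. A secondary point to treat carefully is that a subgradient $\nabla\psi$ of a merely convex (not strictly convex) function need not be globally injective, so the well-definedness of $T$ on all of $K_1^\star$ and the identity $T(Y_{T,1}) = Y_{T,1}^\dagger$ should be asserted $P_{Y_T,1}$-almost everywhere, in parallel with the qualification already present in the univariate result of \cite{athey2006identification}.
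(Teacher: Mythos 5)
Your argument is correct and follows essentially the same route the paper indicates for this result (which it states without proof, citing Torous et al., using exactly the strategy sketched for the preceding theorem): cyclic comonotonicity of the pair $(h_1^\star,h_1)$ plus invertibility of $h_1$ gives cyclic monotonicity of $h_1^\star\circ h_1^{-1}$ via the substitution $u_i=h_1^{-1}(y_i)$, and Brenier uniqueness under the absolute-continuity and convex-support assumptions then identifies $T$ as the inverse (Legendre-conjugate) Brenier map. Your only slip is the opening definition $T=h_1\circ(h_1^\star)^{-1}$, which presupposes invertibility of $h_1^\star$ that is not assumed, but you repair this yourself by constructing $T$ as the $P_{Y_T,1}$-a.e.\ inverse of $\nabla\psi$ and correctly flagging the almost-everywhere qualification and the orientation of the comonotonicity hypothesis.
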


To show that the multivariate extension is useful, consider again the minimum wage setting. A multivariate outcome can now for instance also distinguish between full-time and part-time employees---two groups that a priori should be affected very differently by a minimum wage increase---while accounting for the correlation structure between those sets of employees. \cite{torous2024optimal} reproduce the same result of \cite{ropponen2011reconciling} about the size of the restaurants in question: positive effects on employment for smaller restaurants and negative effects for larger ones. Now, one can also split the problem into full-time and part-time employees. They find a strong (average) positive effect for full-time employees and a strong (average) negative effect for part-time employees, an additional dimension to further understand the effects of increasing the minimum wage. 

\subsection{Individual heterogeneity vs.~the entire system}
The above changes-in-changes estimator and its extension are focused on identification of individual heterogeneity. This is clearer in the univariate setting, where the monotone rearrangement implies that quantiles of $U$ are preserved when mapped to the potential outcomes. In the next section, when introducing a generalization of the synthetic control method, I focus on the systems view. There, the level of interest are not the individual quantiles, but the distribution as a whole. This also means that the standard monotone rearrangement will not underlie the identification strategy. \revision{A similar} view \revision{is powerful in the difference-in-differences setting, which is currently pursued by \cite{saidi2026parallel} via parallel transport in the Wasserstein space}.  This \revision{has} important applications in many areas of science. \revision{For instance, in} single-cell RNA-seq data\revision{,} measuring the gene expression levels of a cell is a destructive process, and as a result a given cell may be only measured once, fitting into the standard causal inference setting we consider \citep{bunne2024optimal}.

\section{Recreating trends: synthetic controls}
\revision{In this section, following the synthetic controls literature, $j$ indexes aggregate units, $t$ indexes time periods, $Y_{jt}$ denotes the outcome for unit $j$ at time $t$, and $X_j$ denotes observed covariates of unit $j$.}

The existence of parallel trends is the fundamental assumption that allows identification in difference-in-difference settings. If one has access to more pre-treatment time periods than just one, this assumption can be tested by checking if the pre-trends between treatment and control are sufficiently parallel. Very often, the parallel trends assumption is violated and \revision{cannot be maintained}. In such settings, it is still possible to obtain estimates of causal effects if one has access to several units that never receive treatment. The seminal idea, introduced in \cite{abadie2010synthetic, abadie2003economic} is to find a convex combination of such control units that can replicate the pre-treatment trend of the target unit as closely as possible. 

\subsection{Classic synthetic controls}
The classic method of synthetic controls is designed for \emph{comparative case studies} \citep{abadie2021using, abadie2015comparative}, where an \emph{aggregate} system or unit, such as a state or industry sector, is exposed to a treatment at one point in time $t=t^*$ and stays treated thereafter. This setting is a case study in that there are only few units to consider: in the most extreme case, only one treated unit and a few control units. Moreover, each unit is observed over several time periods $t=T_0,\ldots, T$, with $T_0<t^*<T$. I call $t \leq t^*$ the pre-treatment- or pre-treatment periods and $t > t^*$ the post-intervention- or post-treatment periods.

\revision{There is broad consensus} that in such settings, a combination of untreated aggregate units can provide a better control group than a single one \citep{abadie2021using}. The original method of synthetic controls then provides a canonical approach to obtain a close replication by projecting the outcome of the target (the treated unit) onto the convex hull of the potential control units to generate the synthetic control. 

Let $\{Y_{0t}\}_{t\in[T_0,T]}$ be the observed time series of the treated unit, and $\{Y_{jt}\}_{t\in[T_0,T]}$ for $j=1,\ldots, J$ be the observed time series of aggregate units that never receive treatment---the potential controls. The key quantity to estimate is $Y_{0t,N}$, the outcome of the treatment unit had it not received the treatment in the post-intervention periods. Based on this, one defines the effect $\alpha_{jt} = Y_{jt,I} - Y_{jt,N}$ of the intervention for unit $j$ at time $t$, so that one can write the observable outcome in terms of the counterfactual notation as
\[
Y_{jt} = Y_{jt,N} + \alpha_{jt} D_{jt},
\]
where $D_{jt}\in\{0,1\}$ is the treatment indicator. 

The goal in the suggested setting is to estimate the treatment effect on the treated group in the post-treatment period, i.e.
\[
\alpha_{0t} = Y_{0t,I} - Y_{0t,N} = Y_{0t} - Y_{0t,N} \quad \text{for } t > t^*,
\]
which requires a model for the unobservable $Y_{0t,N}$. \cite{abadie2010synthetic} introduce a linear factor model
\[
Y_{jt,N} = \delta_t + \theta_t X_j + w_t \mu_j + \varepsilon_{jt},
\]
where $\delta_t$ is a univariate factor, $w_t$ is a vector containing factors whose loadings are captured in $\mu_j$, and where $X_j$ are observed covariates of the respective units. The error terms $\varepsilon_{jt}$ are zero mean transitory shocks. The idea for the synthetic controls method is that $Y_{jt,N} = Y_{jt}$ for $t > t^*$ and $j = 1, \dots, J$, so that the treatment effect on the treatment group can be obtained by a weighted average
\[
\hat{\alpha}_{0t} = Y_{0t} - \sum_{j=1}^{J} \lambda_j^\star Y_{jt},
\]
where $\{\lambda_j^\star\}_{j=1,\dots,J}$ is an optimal set of weights.

The classical synthetic controls estimator in this setting then proceeds in two stages. In the first stage, one obtains the optimal weights $\lambda^\star := \{\lambda_j^\star\}_{j=1,\dots,J}$ which lie in the $J$-dimensional probability simplex $\Delta^J$ and are chosen such that they minimize a weighted Euclidean distance
\[\left( \sum_{k=1}^{K} v_k (X_{k0} - \lambda_1 X_{k1} - \dots - \lambda_{J} X_{kJ})^2 \right)^{1/2},
\]
where $v \in \Delta^K$ is another set of weights which needs to be chosen by the researcher. \cite{abadie2021using, abadie2010synthetic, abadie2015comparative, arkhangelsky2021synthetic} provide possible choices for $v$. In the second stage, the obtained optimal weights $\{\lambda_j^\star\}_{j=1,\dots,J}$ from this minimization are used to create $\hat{Y}_{0t}^N$ in the post-treatment periods as
\[
\hat{Y}_{0t}^N = \sum_{j=1}^{J} \lambda_j^\star Y_{jt}, \quad \text{for } t > t^*,
\]
based on which one can estimate $\hat{\alpha}_{0t} = Y_{0t} - \hat{Y}_{0t}^N$. Since the introduction of the original method, there have been \revision{numerous} extensions and generalizations, see the overview \cite{abadie2021using}. 

\subsection{Distributional synthetic controls}
So far, there is no direct connection between the synthetic controls method and optimal transport. Recently, however, \cite{gunsilius2023distributional} introduced an extension to the synthetic control method in settings where more information about the aggregate unit in question is available. This extension is based on optimal transportation, in particular barycenters in Wasserstein space \citep{agueh2011barycenters}. 

The setting is one where the researcher observes $j=0,\ldots,J$ aggregate systems/units over time periods $t$, but also has access to data $Y_{ijt}$ \emph{within the system}. A canonical example is a state $j$, for instance New Jersey, where the researcher observes fast-food restaurants $i$ within the state $j$ at time $t$. Then one can consider the system ``fast-food restaurants within the state of New Jersey''. The difference to classic longitudinal or panel data settings is that the level of interest of the causal effect is the state level $j$, \emph{not} the individual level $i$ for each restaurant within the state. In particular, this means that the method is applicable in settings where the individuals $i$ within the system $j$ cannot be traced over time. In the above example this can happen when restaurants close or new ones open for instance. Another setting is employees $i$ in a company $j$ and the treatment is administered at the company level. Such a setting has recently been analyzed in \cite{van2024return}. 

The method is designed for univariate outcomes, but the concept can be generalized to multivariate outcomes \citep{gunsilius2024tangential}, as I briefly show below. The quantity of interest is the quantile function $F_{Y_{jt}}^{-1}(q)$. 
The goal is to estimate the counterfactual quantile function $F_{Y_{0t,N}}^{-1}(q)$ of the treated unit had it not received treatment by an optimally weighted average of the control quantile functions $F_{Y_{jt}}^{-1}(q)$ for all $j = 1, \dots, J,$ $t > t^*$:
\[
F_{Y_{0t,N}}^{-1}(q) = \sum_{j=1}^{J} \lambda_j^\star F_{Y_{jt}}^{-1}(q) \quad \text{for all } q \in (0,1).
\]
The weights, as in the classical method, are obtained by trying to replicate the treatment quantile function $F^{-1}_{Y_{0t}}(q)$ as closely as possible by a convex combination of the control quantile functions. 

\revision{Since the $2$-Wasserstein space for measures supported on the real line is flat \citep{kloeckner2010geometric}---meaning that convex combinations of quantile functions are again valid quantile functions---one can exploit this linear structure directly.} The optimization for the replication becomes:
\[
\vec{\lambda}_t^* = \arg\min_{\vec{\lambda} \in \Delta^J} \int_0^1 \left| \sum_{j=1}^{J} \lambda_j F_{Y_{jt}}^{-1}(q) - F_{Y_{0t}}^{-1}(q) \right|^2 dq
\]
for all $t<t^*$. 

In some settings, for instance, when it is known that the distributions are mixtures, it is useful to work with distribution functions instead of quantiles. In this case, using the $1$-Wasserstein distance in CDF form is useful:
\[
\vec{\lambda}_t^* = \arg\min_{\vec{\lambda} \in \Delta^J} \int_{\mathbb{R}} \left| \sum_{j=1}^{J} \lambda_j F_{Y_{jt}}(y) - F_{Y_{0t}}(y) \right| dy.
\]
To obtain one set of weights $\lambda^*$ over all pre-treatment time periods $t$, one usually forms another weighted average over time; \cite{arkhangelsky2021synthetic} provide useful approaches. 
In most practical settings the quantile functions are not given directly but have to be estimated from observations $\{Y_{ijt}\}$, $i=1,\ldots,n_j$, $j=0,\ldots,J$, $t\in [T_0,T]$. 

\revision{In addition to the proposed estimator, one can ask what restrictions the identification strategy imposes on the causal model in this setting.} While the classic method is based on a linear factor model as seen above, the question is how general one can be for a causal model of the form 
\[
P_{Y_{jt,N}} = h_t\# P_{U_{jt}} \quad \text{for} \quad P_{U_{jt}} = g_t\# P_{U_{j(t-1)}}.
\]

\revision{Since the synthetic control method replicates across groups but extrapolates over time---the key identification assumption is that the weights $\vec{\lambda}^*$ obtained in the pre-treatment periods stay optimal in the post-treatment periods---the maps $h$ and $g$ must preserve the relative Wasserstein distances between units, i.e., they must be isometries between the respective measures \citep[Appendix]{gunsilius2023distributional}. 
\begin{assumption}[Weight stability]\label{ass:weight_stability}
The optimal weights $\vec{\lambda}^* \in \Delta^J$ obtained by replicating the treated unit's distribution in the pre-treatment periods $t < t^*$ remain optimal in the post-treatment periods $t \geq t^*$, i.e., $P_{Y_{0t,N}} = \sum_{j=1}^{J} \lambda_j^* P_{Y_{jt}}$ for all $t \geq t^*$.
\end{assumption}
Assumption~\ref{ass:weight_stability} implies that the maps $h$ and $g$ in the causal model must preserve the relative Wasserstein distances between units, i.e., they must be isometries in the $2$-Wasserstein space \citep[Appendix]{gunsilius2023distributional}.}

In the $2$-Wasserstein space on the line the set of such isometries is larger than the standard isometries on $\mathbb{R}$ \citep{kloeckner2010geometric}. \revision{The full class of $2$-Wasserstein isometries on the line is characterized in \cite{kloeckner2010geometric} and includes non-trivial maps beyond standard Euclidean isometries; \cite{gunsilius2023distributional} focuses on the tractable subclass of linear maps, i.e.\ $h(t, U_{jt}) = \alpha_t+\beta_t U_{jt}$. Since the maps must be isometries in the $2$-Wasserstein space and the only smooth isometries on the line are affine, linearity is effectively necessary; the only non-linear alternatives are the exotic isometries of \cite{kloeckner2010geometric}. This restriction extends to the classic method of \cite{abadie2010synthetic}.}

\revision{While the univariate case is the most useful in applied settings, the multivariate setting is also of interest. Unlike the univariate case, the $2$-Wasserstein space over $\mathbb{R}^d$ is non-negatively curved \citep{kloeckner2010geometric}, so that convex combinations of measures are no longer well-defined via simple weighted averages. One hence needs to replace the weighted average by a metric analogue: the barycenter \citep{agueh2011barycenters}.} A direct approach for obtaining the optimal weights in the univariate setting would be
\[
\vec{\lambda}_t^* = \underset{\lambda \in \Delta^J}{\mathrm{argmin}} W_2^2(P_{Y_{0t}}, P(\lambda))
\]

where 

\[
P(\vec{\lambda}) = \underset{{P \in \mathcal{P}_2(\mathbb{R}^d)}} {\mathrm{argmin}} \sum_{j=1}^{J} \frac{\lambda_j}{2} W_2^2(P, P_{Y_{jt}})
\]
is the barycenter in the $2$-Wasserstein space for the weights $\vec{\lambda} = (\lambda_1,\ldots, \lambda_J)\in \Delta^J$. This expression has been used in different settings, for instance in applications in computer vision \citep{bonneel2016wasserstein}. \revision{Solving this problem directly is challenging: it is a bilevel optimization problem in which the inner barycenter problem is non-convex in $\mathbb{R}^d$, and the outer problem may admit multiple local optima.}

One way to achieve this is via the tangent structure. \cite{gunsilius2024tangential} introduced a notion of tangential projection in the $2$-Wasserstein space that is efficient to implement via linear regression. The idea is to lift the problem to the tangent space centered at the target measure $P_{Y_{0t}}$, which linearizes the problem. \revision{Related approaches have been developed in computer vision and machine learning \citep{werenski2022measure, merigot2020quantitative, fan2023generating}. The tangential projection approach of \cite{gunsilius2024tangential} is significantly more efficient and applies to general measures that need not be absolutely continuous with respect to Lebesgue measure. Several open questions remain, including the large-sample distribution theory for the distributional synthetic controls estimator, which is currently being developed \citep{van2024return, zhang2024asymptotic}. A promising direction is to extend these connections beyond distributions to metric measure spaces, using the Gromov-Wasserstein distance \citep{memoli2014gromov}, which would allow comparisons between units whose outcome spaces have different geometries.}

\section{Matching and unbalanced optimal transport}
\revision{In this section, following the matching and treatment effects literature, $T\in\{0,1\}$ denotes the treatment indicator, $X\in\mathbb{R}^d$ the observed covariates, and $(Y_0,Y_1)$ the potential outcomes.}
\revision{Perhaps the most natural connection between optimal transport and causal inference is via matching \citep{stuart2010matching}.} In the most basic setup, we have two treatment groups $T\in\{0,1\}$, the corresponding potential outcomes $(Y_0,Y_1)$, and a (potentially high-dimensional) set of observable covariates $X\in\mathbb{R}^d$. The additional information supplied by the covariates in each group can be used in many different ways \citep{stuart2010matching}. One is \emph{balancing} \citep{ben2021balancing}, which essentially is a pre-processing method for downstream estimation of causal effects. In this step, one selects a sample where the treatment and control samples are more similar than in the original sample \citep{imbens2015causal}. Another main way in which the additional information of the covariates is used is by matching on them. 

Optimal transport is potentially useful for both. Optimal transportation is naturally suited to be used as a pre-processing step to balance covariates. Many related methods have been influential in this area \citep[e.g.][]{hainmueller2012entropy, imai2014covariate}, and classic optimal transportation is beginning to be used in related methods \citep[e.g.][]{arellano2023recovering, dong2023causal, yan2024reducing}. However, classic optimal transport is \emph{not} well-suited to be used in the second problem, that is, as a direct estimator for obtaining causal effects via matching. The issue here is the mass-preserving constraint of classic optimal transport, which requires every individual in treatment- and control group to be matched, as I now argue. 

The idea of using matching for obtaining estimates of causal effects is based on two key assumptions.
\revision{
\begin{assumption}[Unconfoundedness]\label{ass:unconfoundedness}
$(Y_0, Y_1) \perp T \mid X$, i.e., conditional on the observed covariates $X$, treatment assignment $T$ carries no additional information about the potential outcomes.
\end{assumption}
\begin{assumption}[Overlap]\label{ass:overlap}\mbox{}
\[\mathrm{supp}(P_{X|T=0}) = \mathrm{supp}(P_{X|T=1}),\] i.e., every value of the covariates that occurs in one treatment group also occurs in the other.
\end{assumption}
Under both assumptions, one can identify the average treatment effect by comparing individuals with the same observable covariates across treatment groups and averaging over these matched pairs. }Averaging over the differences in outcomes over all of these matched pairs should identify the average treatment effect.  

Due to random sampling, perfect matches between the groups are rare in practice. The question is then how to find good matches. There is a vast literature on matching approaches \citep{stuart2010matching}, including propensity score matching \citep{austin2011introduction, rosenbaum1983central}, and other methods of direct matching \citep[e.g.][and references therein]{kallus2020generalized, morucci2020adaptive, rosenbaum1989optimal, rosenbaum2020modern}. The most widely used idea is to specify a distance, e.g.~a weighted Euclidean distance or some discrepancy measure like the Kullback-Leibler divergence, and proceed via some form of $k$-nearest neighbor matching \citep{rubin1973matching, stuart2010matching}. In the most basic form, every individual in the treatment group gets matched with an individual in the control group. This will lead to biased estimates if the overlap of the supports of the covariate distributions of the two groups is not perfect in the population. In this case, \revision{individuals not in the intersection of the two supports do not have a good match in the other group and should not be matched for an unbiased estimate of the treatment effect.}

\revision{While Assumption~\ref{ass:overlap} is standard in this literature, it is often violated.} This implies that matching via classic optimal transportation is not the right approach: the measure-preserving constraint in the Monge-Kantorovich problem implies that all individuals in both groups have to be matched. Especially in finite samples this introduces excessive bias into the estimator.

The solution to this problem is to use unbalanced optimal transportation \citep{chizat2018unbalanced, liero2018optimal, sejourne2023unbalanced}, which relaxes the measure-preservation constraint, hence allowing for individuals to remain unmatched. The unbalanced optimal transport problem is 
\begin{equation}\label{eq:unbalanced}
\begin{aligned}
\inf_{\gamma\in\mathcal{M}^+\left(\mathcal{X}\times\mathcal{Y}\right)} \int_{\mathcal{X}\times\mathcal{Y}} c(x,y)d\gamma(x,y) + \varepsilon KL(\gamma || P_{X_0}\otimes P_{X_1}) \\+ \rho D_\phi\left(\pi_0\gamma\vert\vert P_{X_0}\right) + \rho D_\phi\left(\pi_1\gamma\vert\vert P_{X_1}\right),
\end{aligned}
\end{equation}
where $KL(\gamma || P_{X_0}\otimes P_{X_1})$ is the Kullback-Leibler divergence \citep{kullback1951information, kullback1959information} between the optimal Kantorovich coupling $\gamma$ and the independence coupling of the two marginals $P_{X_0}$ and $P_{X_1}$, $\pi_j\gamma$ denotes the projection onto the $j$-th marginal of the coupling $\gamma$, $D_\varphi$ denotes the $\phi$-divergence (or Csisz\`ar-divergence) \citep{csiszar1967information, csiszar1975divergence}, and $\mathcal{M}^+\left(\mathcal{X}\right)$ denotes the set of all non-negative finite measures on $\mathbb{R}^d\times\mathbb{R}^d$. The first penalty term using the Kullback-Leibler divergence is only used to improve computational properties, analogous to the classic Sinkhorn regularization \citep{cuturi2013sinkhorn, galichon2010matching}, and can be dropped by setting $\varepsilon=0$.  

\revision{The parameter $\rho>0$ governs how strictly the marginal constraints are enforced: large $\rho$ penalizes deviation from the original marginals $P_{X_0}$ and $P_{X_1}$ heavily, recovering near-balanced optimal transport, while small $\rho$ relaxes the marginal constraints and allows more individuals to remain unmatched.}

The main difference to the classical optimal transport problem are (i) that the optimal coupling $\gamma$ does not need to be a probability measure and (ii) the addition of the $\phi$-divergence terms, which allow for the creation and destruction of mass and do not enforce the constraint from classical optimal transport that the corresponding measures $P_{X_0}$ and $P_{X_1}$ need to have the same mass. In practice, this means that the optimal couplings obtained will provide partial optimal matches \citep{figalli2010optimal, kitagawa2015multi} where individuals that do not have a close enough match are automatically discarded. 

\revision{This automatic and disciplined way of matching two groups based on relative distance is expected to yield finite sample improvements in mean-squared error over existing methods, particularly when Assumption~\ref{ass:overlap} is violated.} Recently, \cite{gunsilius2021matching} analyzed the statistical properties of unbalanced optimal transport problems as the covariate measures $P_{X_0}$ and $P_{X_1}$ are approximated by empirical measures, showing that for fixed penalties $\varepsilon,\rho>0$, as the number of observations increase, the limit element of the empirical process is Gaussian. Moreover, they show that as $\rho\to 0$ after $\varepsilon\to0$ in the population problem \eqref{eq:unbalanced} for regular Csisz\`ar divergences, the optimal coupling $\gamma$ will only put weight on perfect matches from both groups, i.e., matches where the covariates coincide perfectly. This implies that, at least in the population, matching via unbalanced optimal transport can automatically provide unbiased estimates of the average treatment effect \emph{on the intersection of the supports of the two covariate distributions} as the balancing penalty $\rho$ vanishes. 

This is backed by simulations in \cite{gunsilius2021matching}, which show that an unbalanced approach can beat existing benchmarks in such settings. To fully introduce unbalanced optimal transportation into the toolbox of applied causal inference researchers, \emph{proof} of its superiority over other methods is needed. One way to do it is to analyze the statistical properties of \eqref{eq:unbalanced} in the setting where the penalty terms $\varepsilon$ and $\rho$ are \emph{data-dependent} and vanish at specific data-dependent rates. \revision{A natural conjecture is that by choosing appropriate divergence penalties and data-dependent rates of convergence for $\rho_n$, one can establish that unbalanced optimal transport achieves lower mean-squared error than standard matching estimators when Assumption~\ref{ass:overlap} is violated.}

\section{Conclusion}
This review introduced a selective overview of the uses of optimal transport theory in classic causal inference for observational data. The goal is to unify nomenclature and notation and to introduce new and potentially fruitful avenues for future research in this area. There are potentially many other uses of optimal transport in causal inference. However, the reviewed areas are particularly close since optimal transportation has built the foundation for most of the existing results in this area. It is my hope that this review simplifies the exchange between the two areas of research, and uncovers the connections and potential gains for both fields, on which future collaborations can be built.

\bibliographystyle{imsart-number} 
\bibliography{bibliography}       


\end{document}